\documentclass[12pt,english]{article}
\usepackage{geometry}
\usepackage{longtable}
\usepackage{booktabs}
\usepackage{multirow}
\usepackage{setspace}
\usepackage[authoryear,round,colon, sort&compress]{natbib}
\bibpunct{(}{)}{,}{autheryear}{,}{,}
\usepackage{rotating}

\usepackage[english]{babel}
\usepackage{amsmath, amsthm, amssymb, amscd, amsfonts, bm, bbm, mathrsfs}   % math packages
\usepackage{graphicx}
\usepackage{enumerate}
\usepackage[shortlabels]{enumitem}
\usepackage{url} % not crucial - just used below for the URL 
\usepackage[colorlinks]{hyperref}
\usepackage{xr-hyper}
\usepackage[capitalize,nameinlink,noabbrev]{cleveref} % to emulate \autoref style
\usepackage[algo2e,linesnumbered,lined,ruled]{algorithm2e}

\usepackage[bottom]{footmisc}
\usepackage[dvipsnames]{xcolor}
\hypersetup{
	breaklinks,
	linkcolor=blue,
	urlcolor=blue,
	anchorcolor=blue,
	citecolor=blue
}

\makeatletter

\theoremstyle{plain}

\@ifundefined{date}{}{\date{}}
\usepackage{babel}
\allowdisplaybreaks

\usepackage{tikz}
\usetikzlibrary{fit,positioning}

\allowdisplaybreaks

\newcommand*{\indep}{%
  \mathbin{%
    \mathpalette{\@indep}{}%
  }%
}
\newcommand*{\nindep}{%
  \mathbin{%                   % The final symbol is a binary math operator
    \mathpalette{\@indep}{\not}% \mathpalette helps for the adaptation
  }%
}
\newcommand*{\@indep}[2]{%
  \sbox0{$#1\perp\m@th$}%        box 0 contains \perp symbol
  \sbox2{$#1=$}%                 box 2 for the height of =
  \sbox4{$#1\vcenter{}$}%        box 4 for the height of the math axis
  \rlap{\copy0}%                 first \perp
  \dimen@=\dimexpr\ht2-\ht4-.2pt\relax
  \kern\dimen@
  {#2}%
  \kern\dimen@
  \copy0 %                       second \perp
}

\newtheorem{thm}{Theorem}

\newtheorem{lem}{Lemma}

\theoremstyle{definition}
\newtheorem{rem}{Remark}
\newtheorem{defn}{Definition}
\newtheorem{eg}{Example}
\newcommand{\argmin}{\operatornamewithlimits{argmin}} %argmin
\makeatletter
\newcommand*{\rom}[1]{\expandafter\@slowromancap\romannumeral #1@}
\newcommand{\HUGE}{\@setfontsize\Huge{40}{50}}   
\makeatother

\makeatletter
\newcommand{\labelcustom}[2]{%
	\protected@write \@auxout {}{\string \newlabel {#1}{{#2}{\thepage}{#2}{#1}{}} }%
	\hypertarget{#1}{#2}
}
\makeatother

\makeatletter
\newcommand{\labeltext}[3][]{%
	\@bsphack%
	\csname phantomsection\endcsname% in case hyperref is used
	\def\tst{#1}%
	\def\refmarkup{}%
	\ifx\tst\empty\def\@currentlabel{\refmarkup{#2}}{\label{#3}}%
	\else\def\@currentlabel{\refmarkup{#1}}{\label{#3}}\fi%
	\@esphack%
	\labelmarkup{#2}% visible printed text.
}
\makeatother

\makeatletter
\newcommand{\bianca}{\renewcommand\NAT@open{[}\renewcommand\NAT@close{]}}
\makeatother

\newcommand{\iid}{\overset{\mathsf{iid}}{\sim}} % follows iid
\newcommand{\ed}{\overset{\mathsf{d}}{=}} % the same in distribution

\newcommand{\pr}{\mathsf{P}}
\newcommand{\eo}{\mathsf{E}}

\newcommand{\var}{\mathsf{var}}

\newcommand{\nd}{\mathsf{N}}

\newcommand{\ap}{\alpha} %alpha
\newcommand{\g}{\gamma} %gamma
\newcommand{\ga}{\Gamma} %Gamma
\newcommand{\dt}{\delta} % small delta
\newcommand{\Dt}{\Delta} % BIG Delta
\newcommand{\e}{\varepsilon} % epsilon
\newcommand{\ka}{\kappa} % kappa
\newcommand{\s}{\sigma} % sigma
\newcommand{\sa}{\Sigma} % Sigma
\newcommand{\ld}{\lambda} % small lambda
\newcommand{\G}{\mathbb{G}} % G
\newcommand{\HH}{\mathbb{H}} % H
\newcommand{\I}{\mathbb{I}} % I
\newcommand{\N}{\mathbb{N}} % N
\newcommand{\R}{\mathbb{R}} % R
\newcommand{\dd}{\mathcal{D}}	% D
\newcommand{\ee}{\mathcal{E}}	% E
\newcommand{\ii}{\mathcal{I}}	% I
\newcommand{\jj}{\mathcal{J}}	% J
\newcommand{\pp}{\mathcal{P}}	% P
\newcommand{\ttt}{\mathcal{T}}	% T
\newcommand{\vv}{\mathcal{V}}	% V
\let\oldnl\nl
\newcommand{\nlnonumber}{\renewcommand{\nl}{\let\nl\oldnl}}

\renewcommand{\eqref}[1]{(\ref{#1})}

\renewcommand{\sectionautorefname}{Section}
\renewcommand{\subsectionautorefname}{Section}
\renewcommand{\subsubsectionautorefname}{Section}
\renewcommand{\algorithmautorefname}{Algorithm}

\begin{document}

	\renewcommand{\sectionautorefname}{Section}
	\renewcommand{\subsectionautorefname}{Section}
	\renewcommand{\subsubsectionautorefname}{Section}
	\renewcommand{\algorithmautorefname}{Algorithm}

\title{
	Effective and flexible depth-based inference \\
	for functional parameters
}
\author{
	Hyemin Yeon\thanks{
		Department of Mathematical Sciences, Kent State University, Kent, OH, 44242, USA, \\ Email: hyeon1@kent.edu.
	} 
%		~and
%		Sara Lopez-Pintado\thanks{
%		Department of Health Sciences, Northeastern University, Boston, MA, 02115, USA, Email: s.lopez-pintado@northeastern.edu.
%	}
%	and Daniel John Nordman\thanks{
%		Department of Statistics, Iowa State University
%	} 
}

\newpage

\maketitle

%\begin{center}
%	\date{\today}
%\end{center}

\begin{abstract}
	
	\noindent
	For hypothesis testing of functional parameters, 
	given a functional statistic $T_n$ and a functional depth $D$ with respect to the distribution $P_n$ of $T_n$,
	we propose the depth value $DT_n \equiv D(T_n;P_n)$ as a test statistic,
	which we refer to as a \emph{depth statistic}. 
	In practice, its sampling distribution is approximated by a resampling method such as bootstrap.
	While achieving accurate sizes, a test based on the proposed depth statistic produces stronger power,
	as it remains sensitive even to subtle variations arising from complex functional patterns in the alternatives. 
	Moreover, it is broadly applicable to a broad range of inference problems for functional parameters, including two-sample tests, analysis of variance, regression, etc.
	We provide its theoretical guarantee under mild assumptions
	along with examples of bootstrap methods and functional depths that satisfy these conditions.
	Its effectiveness is thoroughly investigated through numerical studies under two popular frameworks: 
	(i) two-sample functional mean tests and (ii) mean response inference for function-on-function regression.
	The proposed depth statistic is illustrated with two data examples: Canadian weather and German electricity prices datasets.

\medskip \noindent
\textit{Keywords and phrases:} 
Data depth,
Functional data analysis,
Infinite-dimensional parameters; 
Shape alternatives.

\end{abstract}

%\newpage
%\tableofcontents

\newpage

\section{Introduction}

\subsection{Background}

In functional data analysis, 
statistical inference for functional parameters is challenging due to infinite dimensionality of the underlying function space. 
Nevertheless, numerous meaningful advances have been made.
These include 
(a) confidence bands for functional parameters by \cite{Degras11, CR18, LR23},
(b) two-sample tests for 
equality of
(i) means by \cite{BIW11, HKR13, BHK09, PSV18, DKA20},
(ii) covariance operators by \cite{PKM10, KP12, FSHK13}, and
(iii) eigenvalues or eigenfunctions by \cite{BHK09, ZS15, ADR23},
(c) functional ANOVA for 
(i) several means by
\cite{CFF04, CF10, ZL14, HR15, GS15, PS16, ZCWZ19, LLM23} and
(ii) multiple covariance operators by \cite{PS16, BRS18, GZZ19, MPZ24},
(d) tests of equality between distributions by \cite{HV07, WD22, HHK22, CLWW25},
(e) significance tests for contrast functions in 
(f) function-on-scalar regression models by \cite{fara97, SF04, ZC07, Zhang11} and
(g) functional concurrent regression models by \cite{GM22, WPCL18},
(h) inference for mean response or slope operator in function-on-function regression models by \cite{KMSZ08, Yeon26, CM13, DT24}.
Classical summary statistics in this literature often take the form of
quadratic structures incorporating eigendecay (e.g., truncated Hotelling's $T^2$), 
integration (e.g., squared $L^2$ norm), 
supremum (e.g., supremum of pointwise F statistics),
or their variants,
where their performance can be enhanced by resampling methods such as bootstrap.

Complementary to inference for functional parameters, 
considerable attention has also been devoted to the study of functional depths. 
Originally introduced in multivariate statistics \citep{Tukey75, Liu90, Chau96, Mosler02, Zuo03}, 
the notion of data depth has been extended to functional data, 
leading to proposals such as 
integrated \citep{FM01}, 
kernel (or $h$-mode) \citep{CCF06}, 
(modified) band \citep{LR09}, 
half-region \citep{LR11}, 
infimal \citep{Mosler13}, 
spatial \citep{CC14b}, 
extremal \citep{NN16}, 
regularized Mahalanobis \citep{BBC20}, 
$L^2$ \citep{RC24},
spherical \citep{MN25},
and regularized halfspace \citep{YDL25RHD} depths;
see \autoref{secFdepth} for a brief overview of some selective functional depths.
These constructions provide rankings of complex functional observations that capture the geometry of the underlying distribution, 
beyond simple balls or ellipsoids determined solely by norms or quadratic forms. 
Functional depths have enabled a wide range of statistical applications 
such as 
rank-based tests \citep{GKN24}, 
classification \citep{HRS17}, and 
outlier detection \citep{NGH17}. 
Since there have been thousands of contributions in the vast literature on both (i) inference for functional parameters and (ii) functional depths,
we do not attempt an exhaustive review here.

\subsection{Depth statistic for functional parameters}

In this work, we propose using the depth evaluated at a statistic, 
as a test statistic for general statistical inference problems for functional parameters.
This approach will be able to bridge the aforementioned two active communities: inference for functional parameters and depth for functional data.
Concretely, given (i) a test statistic $T_n$ to test a hypothesis $H_0:\theta = \theta_0$ about a functional parameter $\theta$ and (ii) a functional depth $D$ at hand, 
we define the depth statistic as $DT_n \equiv D(T_n; P_n)$,  
the depth value evaluated at $T_n$ with respect to its sampling distribution $P_n$,
where small values of $DT_n$ support evidence against $H_0$.
Although the sampling distribution of the depth statistic $DT_n$ cannot be explicitly provided,
we employ resampling methods to approximate them, 
such as well-established bootstrap methods for functional data \cite[e.g.,][]{BHK09, PS16, Yeon26}.
The proposed depth statistic then leads to valid inference tools for functional parameters,
as long as 
(i) the resampled distribution $\widehat{P}_n$ consistently estimate the original sampling distribution $P_n$ with limiting distribution $P$, 
(ii) the functional depth possesses uniform continuity in either element or probability distribution, and 
(iii) the depth value at the limiting random element with respect to $P$ is continuous.

The depth statistic exploits the entire ordering structure of the sampling distribution for hypothesis testing.
It offers not only accurate sizes but also greater power than classical summary statistics,
which often fail to detect complex alternatives
whose signals hindered by complexity in functional data.
To illustrate, we consider a realization of 50 curves under the two-sample functional mean framework (\autoref{egFANOVA}(a)), 
with 25 curves per group (cf.~\autoref{ssec_5_1}).
\autoref{fig_2test_eg} displays 
the two-sample test statistic $T_{\mathrm{two},\bm{n}} = \bar{X}_1 - \bar{X}_2$ in green, 
the true mean difference $\theta = \mu_1-\mu_2$ in red,
and 1,000 bootstrap statistics $T_{\mathrm{two},\bm{n}}^* = \bar{X}_1^* - \bar{X}_2^*$ (cf.~\autoref{ssec_2test}) in gray.
Although the problem is challenging because the group means differ only in shape,
the depth statistics effectively capture subtle functional alternatives,
with the regularized halfspace depth and kernel depth (cf.~\autoref{secFdepth}) yielding p-values below the significance level 0.05
(0.000 for the former with quantile level 0.1 and 0.009 for the latter with quantile level 0.01).
In contrast, classical statistics often struggle to detect complex signals in functional alternatives, 
as illustrated by the commonly used $L^2$ and supremum norm–based tests. 
In \autoref{fig_2test_eg}, the 95\% confidence band constructed from the supremum norm (shown in blue), with critical values calibrated via bootstrap statistics, still contains zero even though the true mean difference deviates from zero in shape, yielding a p-value of 0.098.
This indicates that the supremum statistic fails to capture the functional characteristic in shape. 
Similarly, the $L^2$ statistic produces a p-value above the 0.05 significance level (0.081). 
Further details on the superior performance of the proposed depth statistics are provided in \autoref{sec5}.

\begin{figure}[h!]
	\centering
	\includegraphics[width=0.7\linewidth]{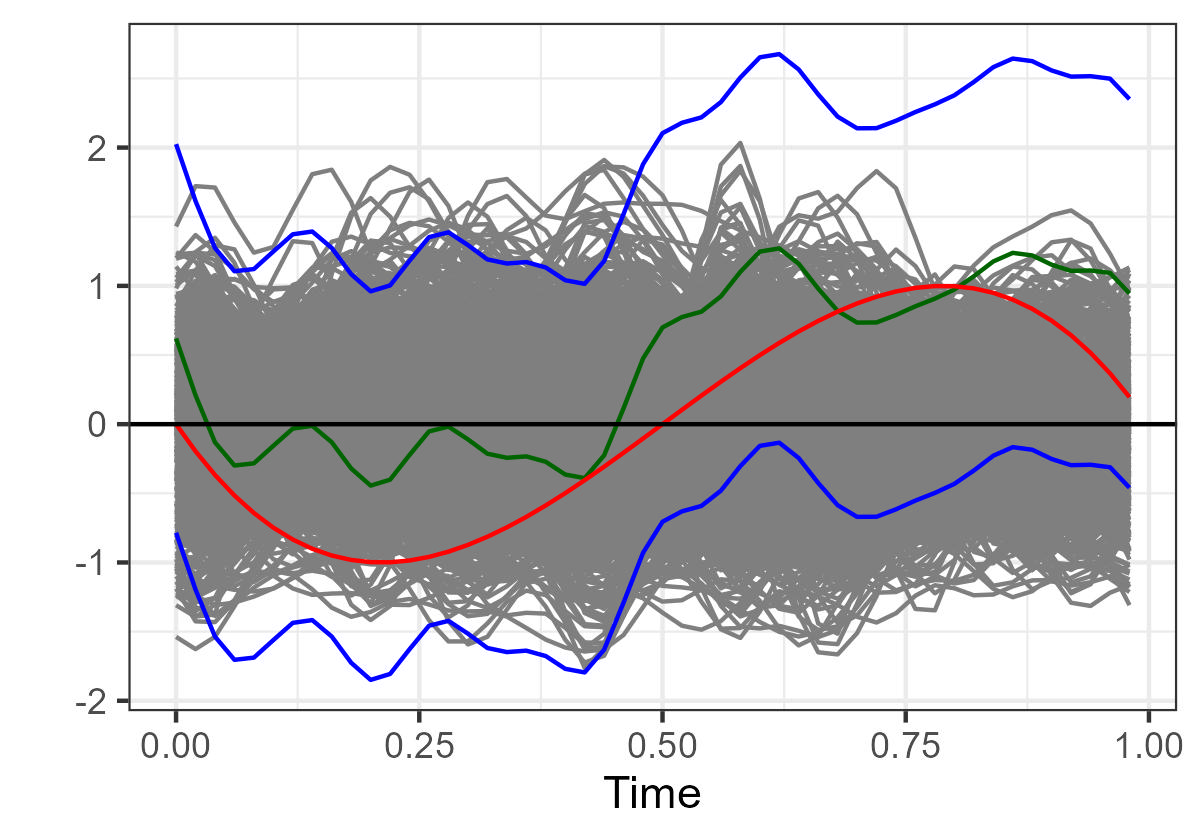}
	\caption{
		The test statistic $T_{\mathrm{two},\bm{n}} = \bar{X}_1-\bar{X}_2$ (green), 
		bootstrap statistics $T_{\mathrm{two},\bm{n}}^* = \bar{X}_1^* - \bar{X}_2^*$ (grey),
		and the true mean difference $\theta = \mu_1-\mu_2$ (red, cubic alternative).
		The black horizontal line at zero stands for the true mean difference in the null hypothesis $H_0:\mu_1=\mu_2$.  
		The $n_k=25$ functional observations in each group $k$ are generated as described in \autoref{ssec_5_1} under equal covariance scenario when the functional principal components are of NN-type.
		The details about the two-sample bootstrap procedure can be found in \autoref{ssec_2test}. 
	}
	\label{fig_2test_eg}
\end{figure}

Beyond its effectiveness, 
the proposed depth statistic is broadly applicable to general inference problems for functional parameters.
It provides valid tools whenever the functional depth is well-estimated and the resampling distribution consistently approximates the sampling distribution. 
In \autoref{secInferFDA}, we present several functional inference frameworks, including those discussed above, and highlight two fundamental applications: 
(i) two-sample functional mean testing and (ii) mean response inference in function-on-function regression. 
Their theoretical foundations and numerical performance are detailed in Sections~\ref{secInferFDA} and~\ref{sec5}, respectively.

%The depth statistic naturally induces two key inferential tools:
%depth p-value and the depth confidence region. 
In developing our functional depth statistic $DT_n$, 
we noted related ideas for multivariate data by \citet[depth p-value]{LS97} and \citet[depth confidence region]{YS97},
yet there are several unique distinctions of our proposal.
First, the depth statistic is defined at the population level, 
whereas the concepts in \cite{YS97, LS97} exist only at the bootstrap level. 
Second, establishing validity for functional parameters could be nontrivial due to their infinite dimensionality. 
For instance, a bootstrap may fail when infinite dimension is involved \citep{YDN24PB}
and a depth consistency rarely holds over an entire infinite-dimensional space \citep{WN25}.
Nevertheless, 
we provide important examples satisfying all sufficient conditions for our main result in \autoref{thmDepthStat},
which offers a stronger form than \citet[Theorem~1]{LS97}.
Third, the practical advantages of the functional depth statistic are more pronounced, 
since functional alternatives are far less easily detectable than finite-dimensional ones (cf.~\autoref{fig_2test_eg}). 
Fourth, to the best of our knowledge, depth-based inference has not previously been studied for functional parameters. 
Finally, no comprehensive power analysis has been conducted in numerical studies, 
further underscoring the novelty of our contribution.

\subsection{Organization of the paper}

The structure of the paper is organized as follows. 
After summarizing preliminary notation,
in \autoref{secDepthInfer}, the formal definition of the proposed functional depth statistic is presented along with its theoretical validity.
Sections~\ref{secInferFDA} and~\ref{secFdepth} then provide various examples of inference problems and depth for functional data,
where our framework is applicable.
Importantly, we found some cases that satisfies our technical conditions.
Namely, we develop detailed procedures for functional depth-based inference in the contexts of two-sample testing and function-on-function regression in \autoref{secInferFDA};
and we establish continuity properties of a functional depth in \autoref{secFdepth}. 
\autoref{sec5} reports numerical power analyses that demonstrate the superiority of the proposed depth statistic, 
while \autoref{sec6} provides empirical illustrations using real data. 
All proofs are deferred to the supplementary material, 
and the R implementations of the proposed methods are publicly available through the author’s GitHub repository.\footnote{https://github.com/luckyhm1928}

\subsection{Notation}

We fix the notation, which is used throughout the paper.
Our methodology is founded on a Hilbert space framework.
We use $\HH, \HH', \HH''$ to represent infinite-dimensional separable Hilbert spaces;
this distinction will be needed only when data and parameters spaces are different. 
As an example, we mainly work on the space $L^2([0,1])$ of all square-integrable real-valued functions on $[0,1]$,
where the inner product there is defined as $\langle f, g \rangle \equiv \int_0^1 f(t)g(t)dt$ for $f,g \in L^2([0,1])$ and induces norm $\|\cdot\|$. 
For univariate functions $f_1,\dots,f_p$ on some common domain $E$, $\bm{f} = (f_1, \dots, f_p)^\top:E \to \R^p$ is the multivariate function defined by $\bm{f}(t) = (f_1(t), \dots, f_p(t))^\top$ for $t \in E$.
In particular, if $\{f_1, \dots, f_p\} \subseteq \HH = L^2([0,1])$, then $\bm{f} = (f_1, \dots, f_p)^\top$ resides in another separable Hilbert space $\HH' = \{L^2([0,1])\}^p$. 
For $x$ in a Hilbert space $\HH$, $x^{\otimes 2} = x \otimes x$ is a shorthand expression to denote the squared tensor product operator of $x$ on $\HH$, defined by $x^{\otimes 2}(f) = \langle x, f \rangle x$ for $f \in \HH$.

For a Hilbert space $\HH$, $\pp(\HH)$ stands for the set of all probability distributions on $\HH$.
Let $\pi$ denote the Prohorov metric on the space $\pp(\HH)$ of probability measures on $\HH$ \cite[cf.][Section~5]{bill99}.
It is useful to recall that $\pi$ characterizes the weak convergence in $\pp(\HH)$.
We reserve $P$ to denote the distributions of the statistics or limiting random element,
while $Q$ is used for a generic notation for probability distributions.
The set of all continuous real-valued functions is denoted by $C([0,1])$ equipped with supremum norm $\|\cdot\|_{\sup}$ defined as $\|f\|_{\sup} \equiv \sup_{t \in [0,1]} |f(t)|$ for $f \in C([0,1])$. 
Also, for a probability distribution $Q$ on $C([0,1])$, a random element $Z \sim Q$, and each $t \in [0,1]$,
we write $Q_t$ for denoting the probability distribution of $Z(t)$.

\section{Depth-based inference for functional parameters}

\label{secDepthInfer}

Here, we describe the depth statistic in a general Hilbert space framework.
We suppose data observations $\dd_{\bm{n}}$ including functional data valued in $\HH$ are given for statistical inference for functional parameter $\theta \in \HH'$.
Here, $\bm{n}$ stands for the collection of all sample sizes, 
as we may observe multiple samples;
$\bm{n} = n$ for one-sample cases.
With estimator $\hat{\theta}$ for $\theta$, we may construct a functional statistic $T_{\bm{n}} \equiv A_{\bm{n}}(\hat{\theta}-\theta_0)$ for testing $H_0:\theta = \theta_0$ for some $\theta_0 \in \HH'$.
%which might be inspired by the central limit theorem for $\hat{\theta}$ given by $A_{\bm{n}} (\hat{\theta} - \theta) \xrightarrow{\mathsf{d}} P$ for some distribution $P$ on $\HH$ (if exists).
Here, $A_{\bm{n}}:\HH'\to\HH''$ is a bounded linear operator from $\HH'$ to a possibly different separable Hilbert space $\HH''$ 
with minimum eigenvalue $\ld_{A,\min,\bm{n}}$ being positive but diverging.
%\tred{?? mostly, $\HH''=\HH'$}
This operator represents a normalizing or scaling term.
While there are various cases where $\HH,\HH',\HH''$ are different, sample sizes are multiple, and scaling could be an operator (cf.~\autoref{secInferFDA}),
we describe our proposal assuming $\HH = \HH' = \HH''$, $\bm{n} = n$, and scalar scaling,
since the argument will not lose generality.

Let $P_n$ denote the (true) sampling distribution $P_n$ of $T_n$ under $H_0$,
i.e., $P_n(A) = \pr(T_n \in A|H_0)$ for a Borel set $A$ in $\HH$. 
Consider a depth function $D(\cdot;  P_n)$ on $\HH$ with respect to $P_n$ of $T_n$. 
Then, to test $H_0:\theta = \theta_0$, we propose using the \emph{depth statistic} defined by 
\begin{align} \label{eq_depth_stat}
	DT_n \equiv D(T_n; P_n).
\end{align}
We note that small values of $DT_n$ give evidence against $H_0$.

While the depth statistic in \eqref{eq_depth_stat} is flexible,
it requires finding the sampling distribution $P_n$ and computing the distribution $F_n$ of the depth statistic, which may not have explicit forms. 
Here, $F_n$ stands for the cumulative distribution function (CDF) of $DT_n$ under $H_0$, i.e., $F_n(u) = \pr(DT_n \leq u|H_0)$ for $u \in \R$.
For this, we estimate $F_n$ by employing a resampling method.
Suppose we have a consistent resampling method at hand
so that we construct resampling statistic
$T_n^* \equiv A_n(\hat{\theta}^* - \hat{\theta})$
with its (resampling) distribution $ \widehat{P}_n$ under $H_0$.
%Denoting $\widehat{D}_n \equiv D(\cdot; \widehat{P}_n)$, 
The resampling depth statistic is given by
\begin{align} \label{eq_depth_stat_boots}
	DT_n^* \equiv D(T_n^*; \widehat{P}_n)
\end{align}
with CDF $\widehat{F}_n \equiv \pr^*( DT_n^* \leq \cdot|H_0)$ under $H_0$.

We will show that, given the consistency of the sampling and resampling distributions and some continuity properties of the depth, 
the resampling depth statistic distribution $\widehat{F}_n$ consistently estimates its original version $F_n$.
Before stating the theorem, we list the technical assumptions. 
\begin{enumerate}[(C1)]
	\item \label{condConvDistnData}
	There exists $P \in \pp(\HH)$ such that $\pi(P_n, P)\to0$ as $n\to\infty$;
	
	\item \label{condConvDistnBTS} $\pi(\widehat{P}_n, P_n) \xrightarrow{\pr} 0$ as $n\to\infty$;
	
	\item \label{condContPoint}
	for a sequence $\{x_n\}$ in $\HH$ and $x \in \HH$ such that $x_n\to x \in \HH$, 
	$\{D(x_n;Q)\}$ converges to $D(x;Q)$ uniformly over $Q \in \pp(\HH)$, 
	i.e.,
	$\sup_{Q \in \pp(\HH)} |D(x_n;Q) - D(x;Q)| \to 0$;
	
	\item \label{condContProb}
	for a sequence $\{Q_n\}$ in $\pp(\HH)$ and $Q \in \HH$ such that $\pi(Q_n,Q) \to 0$,
	$\{D(x;Q_n)\}$ converges to $D(x;Q)$ uniformly over $x \in \HH$,
	i.e., 
	$\sup_{x \in \HH} |D(x;Q_n)-D(x;Q)| \to 0$; and
	
	\item \label{condLimitCont}
	for $T \sim P$, 
	the cumulative distribution function of the random variable $D(T;P)$ is continuous.
\end{enumerate}

Condition~\ref{condConvDistnData} means that the sampling distribution $P_n$ weakly converges to some limiting distribution $P \in \pp(\HH)$,
while in Condition~\ref{condConvDistnBTS}, the resampling distribution $\widehat{P}_n$ consistently estimates $P_n$, 
implying its weak convergence to $P$. 
In Conditions~\ref{condContPoint}--\ref{condLimitCont}, we require the depth function to be (uniformly) continuous in either the element $x \in \HH$ and the probability measure $P \in \pp(\HH)$.
We remark that Condition~\ref{condLimitCont}, though mild, may fail for degenerate functional depths \citep{GN17}. 
To assure practical relevance of these technical conditions, 
we will provide concrete examples that satisfy Conditions~\ref{condConvDistnData}--\ref{condConvDistnBTS} in \autoref{secInferFDA} and Conditions~\ref{condContPoint}--\ref{condLimitCont} in \autoref{secFdepth}, respectively.
Our main theoretical result is given below, 
while the proof is deferred to \autoref{AppProofMain} of the supplement.

\begin{thm} \label{thmDepthStat}
	Suppose that Conditions~\ref{condConvDistnData}--\ref{condLimitCont} hold. 
	Then, the distribution $\widehat{F}_n$ of the resampling depth statistic $DT_n^*$ consistently estimates the distribution $F_n$ of the original depth statistic $DT_n$ in the sense that 
	\begin{align*}
		\sup_{u \in \R} |\widehat{F}_n(u) - F_n(u)| \xrightarrow{\pr} 0.
	\end{align*}
	
\end{thm}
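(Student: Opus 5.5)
The strategy is to show that both $F_n$ and $\widehat{F}_n$ converge to the common limit $F(u) \equiv \pr(D(T;P)\le u)$, with $T\sim P$. By Condition~\ref{condLimitCont}, $F$ is continuous. Pólya's theorem then upgrades pointwise (weak) convergence to uniform convergence. Concretely, it suffices to prove
\begin{align*}
\sup_{u\in\R}|F_n(u)-F(u)|\to 0 \quad\text{and}\quad \sup_{u\in\R}|\widehat{F}_n(u)-F(u)|\xrightarrow{\pr}0,
\end{align*}
and conclude by the triangle inequality. By Pólya's theorem, each of these reduces to showing that $DT_n$ converges in distribution to $D(T;P)$, and likewise $DT_n^*$ conditionally, in probability.

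For the non-resampling part, write $T_n\sim P_n$ and split
\begin{align*}
D(T_n;P_n) = D(T_n;P) + \{D(T_n;P_n)-D(T_n;P)\}.
\end{align*}
By Condition~\ref{condConvDistnData} and Condition~\ref{condContProb}, the bracket is bounded by $\sup_x|D(x;P_n)-D(x;P)|\to0$ deterministically. For the first term, I would show that $x\mapsto D(x;P)$ is continuous on $\HH$: this is immediate from Condition~\ref{condContPoint} by taking $Q=P$. The continuous mapping theorem then gives $D(T_n;P)\xrightarrow{\mathsf d}D(T;P)$, and Slutsky's lemma yields $DT_n\xrightarrow{\mathsf d}D(T;P)$.

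For the resampling part, I would use a subsequence argument to turn convergence in probability into almost sure convergence. By Condition~\ref{condConvDistnBTS} and Condition~\ref{condConvDistnData}, $\pi(\widehat P_n,P)\le\pi(\widehat P_n,P_n)+\pi(P_n,P)\xrightarrow{\pr}0$. Every subsequence therefore has a further subsequence along which $\pi(\widehat P_n,P)\to0$ almost surely. On that event, fix an outcome, so that $\widehat P_n$ is a deterministic sequence converging to $P$, and run exactly the previous argument with $\widehat P_n$ in place of $P_n$. This gives $\sup_x|D(x;\widehat P_n)-D(x;P)|\to0$ by Condition~\ref{condContProb}, together with $D(T_n^*;P)\xrightarrow{\mathsf d}D(T;P)$ conditionally, by the continuous mapping theorem applied to $T_n^*\sim\widehat P_n\Rightarrow P$. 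Pólya then gives $\sup_u|\widehat F_n(u)-F(u)|\to0$ almost surely along the sub-subsequence. Since this holds for every subsequence, the sup-distance converges to zero in probability.

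The main obstacle is purely technical: handling the random measure $\widehat P_n$ rigorously. One has to ensure measurability of $\sup_u|\widehat F_n-F|$, which is fine because both CDFs are right-continuous, so the supremum can be taken over rationals. One also has to justify conditioning on the data outcome. The subsequence device handles the latter. Note also that the uniformity in Condition~\ref{condContPoint} over $Q$ is stronger than needed here, since only continuity of $D(\cdot;P)$ is used. It would become essential if one instead worked directly with $D(\cdot;\widehat P_n)$ evaluated along converging points, for example via a Skorokhod representation. I would keep that alternative as a fallback if the Slutsky route runs into issues.
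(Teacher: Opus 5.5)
Your proposal is correct and follows essentially the same route as the paper. The ingredients match: the split $D(T_n;P_n)=D(T_n;P)+\{D(T_n;P_n)-D(T_n;P)\}$ with the bracket controlled by Condition~\ref{condContProb}, continuity of $D(\cdot;P)$ from Condition~\ref{condContPoint} plus the continuous mapping theorem and Slutsky, P\'olya's theorem via Condition~\ref{condLimitCont}, and a subsequence argument for the resampling part. Your treatment of the random measure $\widehat{P}_n$ is a cleaner and more explicit version of the paper's ``possibly following a subsequence'' step: you extract almost-surely convergent sub-subsequences of $\pi(\widehat{P}_n,P)$ and run the deterministic argument pathwise.
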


In practice, like other resampling-based methods, we approximate the resampling depth statistic distribution $\widehat{F}_n$ using Monte Carlo iterations.
The procedure  is summarized as follows and it can be applied to a general situation provided that the resampling consistency is guaranteed.
For $b=1,\dots,B$, we draw resampled data $\dd_{b,n}$ to compute the resampling statistic $T_{b,n}^*$.
Then, $\widehat{F}_n$ is approximated by
\begin{align} \label{eq_depth_boots_distn}
	\widetilde{F}_n (u)
	= \widetilde{F}_{n,B} (u)
	\equiv B^{-1} \sum_{b=1}^B \I(D(T_{b,n}^*;\widetilde{P}_n) \leq u), \quad u \in \R,
\end{align}
where $\widetilde{P}_n = \widetilde{P}_{n,B}$ denotes the empirical distribution of the resampled statistics $\ttt_{B,n} \equiv \{T_{b,n}^*\}_{b=1}^B$.

\medskip
\noindent
\textbf{Depth p-value} \ \
A natural application of the proposed depth statistic is the \emph{depth p-value}  defined by $\pi_n \equiv F_n(DT_n)$.
With $\widehat{DT}_n \equiv D(T_n; \widehat{P}_n)$, the resampling depth p-value $\hat{\pi}_n$ is then defined by
\begin{align} \label{eq_depth_pval_boots}
	\hat{\pi}_n \equiv \widehat{F}_n(\widehat{DT}_n).
\end{align}
This resampling depth p-value is indeed valid
because it asymptotically follows the standard uniform distribution $\mathsf{U}(0,1)$ as asserted in \autoref{thmDepthPval}.
The result is a direct consequence of \autoref{thmDepthStat},
whose proof is given in \autoref{AppProofMain} of the supplement. 
\begin{thm} \label{thmDepthPval}
	Suppose that Conditions~\ref{condConvDistnData}--\ref{condLimitCont} hold. 
	The resampling depth p-value $\hat{\pi}_n$ in \eqref{eq_depth_pval_boots} converges in distribution to the standard uniform distribution, i.e., 
	$\hat{\pi}_n \xrightarrow{\mathsf{d}}  \mathsf{U}(0,1)$.
\end{thm}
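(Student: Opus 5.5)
We prove Theorem~\ref{thmDepthPval} by reducing it to Theorem~\ref{thmDepthStat} together with a direct analysis of $\widehat{DT}_n$. Let $F$ denote the CDF of $D(T;P)$ with $T\sim P$; by Condition~\ref{condLimitCont}, $F$ is continuous.

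\textbf{Step 1: $F_n$ converges uniformly to $F$.} Write $DT_n = D(T_n;P) + \{D(T_n;P_n)-D(T_n;P)\}$. The bracket is bounded by $\sup_x |D(x;P_n)-D(x;P)|$, which tends to $0$ by Conditions~\ref{condConvDistnData} and~\ref{condContProb}. Next, $T_n \xrightarrow{\mathsf d} T$ by Condition~\ref{condConvDistnData}. Condition~\ref{condContPoint} with $Q=P$ makes $x\mapsto D(x;P)$ continuous on $\HH$, so the continuous mapping theorem gives $D(T_n;P)\xrightarrow{\mathsf d} D(T;P)$. Slutsky's lemma then yields $DT_n \xrightarrow{\mathsf d} D(T;P)$. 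Since $F$ is continuous, Pólya's theorem upgrades this to $\sup_u|F_n(u)-F(u)|\to0$.

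\textbf{Step 2: combine with Theorem~\ref{thmDepthStat}.} Theorem~\ref{thmDepthStat} gives $\sup_u|\widehat F_n(u)-F_n(u)|\xrightarrow{\pr}0$. Together with Step~1 this gives $\sup_u|\widehat F_n(u)-F(u)|\xrightarrow{\pr}0$.

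\textbf{Step 3: replace $\widehat{DT}_n$ by $DT_n$.} We have $|\widehat{DT}_n - DT_n| \le \sup_x|D(x;\widehat P_n)-D(x;P_n)|$. This is the main technical point, because Condition~\ref{condContProb} is deterministic, whereas Condition~\ref{condConvDistnBTS} holds only in probability. To handle this, combine Conditions~\ref{condConvDistnData}--\ref{condConvDistnBTS} to obtain $\pi(\widehat P_n,P)\xrightarrow{\pr}0$. Then use the subsequence characterization of convergence in probability: every subsequence has a further subsequence along which $\pi(\widehat P_n,P)\to0$ almost surely. Along that subsequence, Condition~\ref{condContProb} gives $\sup_x|D(x;\widehat P_n)-D(x;P)|\to0$ almost surely. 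Applying the triangle inequality with the deterministic bound for $P_n$, we get
\[
\sup_x|D(x;\widehat P_n)-D(x;P_n)|\xrightarrow{\pr}0 ,
\]
and hence $\widehat{DT}_n - DT_n\xrightarrow{\pr}0$. (Presumably the same device appears in the proof of Theorem~\ref{thmDepthStat}, and we may reuse it.)

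\textbf{Step 4: conclude.} Decompose
\[
\hat\pi_n = \{\widehat F_n(\widehat{DT}_n) - F(\widehat{DT}_n)\} + \{F(\widehat{DT}_n)-F(DT_n)\} + F(DT_n).
\]
\begin{enumerate}[(i)]
\item The first term is at most $\sup_u|\widehat F_n-F|$, which tends to $0$ in probability by Step~2.
\item For the second term, $F$ is continuous and nondecreasing with limits $0$ and $1$ at $\mp\infty$, hence uniformly continuous on $\R$. Since $\widehat{DT}_n-DT_n\xrightarrow{\pr}0$ by Step~3, this term also tends to $0$ in probability.
\item For the last term, $DT_n\xrightarrow{\mathsf d} D(T;P)$ by Step~1 and $F$ is continuous, so the continuous mapping theorem gives $F(DT_n)\xrightarrow{\mathsf d}F(D(T;P))$. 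By the probability integral transform, this limit is $\mathsf U(0,1)$, using continuity of $F$ from Condition~\ref{condLimitCont}.
\end{enumerate}
Slutsky's lemma then gives $\hat\pi_n\xrightarrow{\mathsf d}\mathsf U(0,1)$.
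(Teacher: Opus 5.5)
Your proof is correct and follows essentially the paper's argument. You obtain $\sup_{u}|\widehat{F}_n(u)-F(u)|\xrightarrow{\pr}0$ from \autoref{thmDepthStat} together with the uniform convergence $F_n\to F$ established in its proof, split $\hat{\pi}_n$ into $\{\widehat{F}_n-F\}(\widehat{DT}_n)$ plus an $F$-term, and conclude with Slutsky's theorem and the probability integral transform. The one difference is that you compare $\widehat{DT}_n$ with $DT_n$ (same $T_n$) and only need $DT_n\xrightarrow{\mathsf{d}}D(T;P)$. The paper instead asserts $D(T_n;\widehat{P}_n)\xrightarrow{\pr}D(T;P)$, which tacitly requires coupling $T_n$ with $T$, since Condition~\ref{condConvDistnData} gives only weak convergence, so your handling of that step is the cleaner one.
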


Using the approximate distribution $\widetilde{F}_n$ in \eqref{eq_depth_boots_distn},
the estimated depth p-value $\hat{\pi}_n$ by a resampling method is approximated via Monte Carlo approximation by 
\begin{align} \label{eq_depth_pval_boots_approx}
	\tilde{\pi}_n 
	= \tilde{\pi}_{\bm{n},B}  \equiv \widetilde{F}_n( \widetilde{DT}_n ),
\end{align}
where $\widetilde{DT}_n \equiv D(T_n; \widetilde{P}_n)$.
We use the approximate p-value in \eqref{eq_depth_pval_boots_approx}
for testing the null hypothesis $H_0:\theta = \theta_0$.
Namely, if the p-value $\tilde{\pi}_n$ is no bigger than the significance level, then we reject $H_0$. 

\begin{rem}
	One could use the proposed depth statistic to construct a confidence region for infinite-dimensional parameter $\theta$,
	e.g., by inverting the test for $H_0:\theta = \theta_0$ \citep[Section~9.2.1]{CB02}. 
	The subsequent confidence region may be given by
	\begin{align*}
		\widehat{C}_n \equiv \{f \in \HH': D(A_n(\hat{\theta}-f);\widehat{P}_n) \geq \widehat{F}_n^{-1}(\ap)\},
	\end{align*}
	where its $1-\ap$ asymptotic coverage can be derived from \autoref{thmDepthStat} under Conditions~\ref{condConvDistnData}--\ref{condLimitCont} \citep[cf.][Lemma~F.7]{YK26}.
	Nevertheless, we do not pursue these tools as using them for testing $H_0:\theta = \theta_0$ is equivalent to testing $H_0$ with the depth p-value in \eqref{eq_depth_pval_boots}.
\end{rem}

\section{Inference problems for functional data}
\label{secInferFDA}

As established in \autoref{thmDepthStat}, the proposed depth statistic demonstrates broad applicability. 
Among the numerous examples of functional parameter inference problems to which it may be applied, we elaborate on several representative cases below.

\begin{eg}[Functional ANOVA] \label{egFANOVA}
	Let $\{X_{ki}\}_{i=1}^{n_k}$, $k=1,\dots, K$, be $K \in \N$ independent random samples that take values in a Hilbert space $\HH$,
	where $\bm{n} = (n_1, \dots, n_K)$ denotes the collection of all sample sizes. 
	For each $k = 1, \dots, K$, we consider the following setup.
	Suppose that these functional observations have finite second moment, i.e., $\eo[\|X_{k1}\|^2]<\infty$, 
	and define their mean functions and covariance operators as $\mu_k \equiv \eo[X_{k1}] $ and $\ga_k \equiv \var[X_{ki}] = \eo[(X_{ki} - \mu_k)^{\otimes 2}]$.
	We write $\bar{X}_k \equiv n_k^{-1} \sum_{i=1}^{n_k} X_{ki}$ and $\widehat{\ga}_k \equiv n_k^{-1} \sum_{i=1}^{n_k}(X_{ki}-\bar{X}_k)^{\otimes 2}$ to denote the sample mean and sample covariance of group $k$, respectively.
	Furthermore, $\phi_{kj}$ and $\hat{\phi}_{kj}$ denotes the $j$-th eigenfunctions of $\ga_k$ and $\widehat{\ga}_k$, respectively.
	
	\begin{enumerate}[(a)]
		\item \textit{Two-sample tests.} \ \ 
		We first consider the case with $K=2$ and define
		$a_{\mathrm{two},\bm{n}} = (n_1^{-1}+n_2^{-1})^{-1/2}$. 
		The null hypotheses and corresponding functional statistics are summarized in \autoref{tb_eg_two} depending on the types of the tests. 
		These problems have been widely studied by
		\cite{HKR13, FHKS14, DKA20} for mean,
		\cite{PKM10, PADS14} for covariance, and
		\cite{BHK09, ZS15, ADR23} for eigenfunctions, 
		to name a few.
		\begin{table}[h!]
			\renewcommand{\arraystretch}{1.3}
			\centering
			\caption{
				The null hypotheses and test statistics for different functional two-sample tests.
				}
			\label{tb_eg_two}\medskip
			\begin{tabular}{c|cc} \hline
				Tests & Null hypotheses & Test statistics  \\ \hline
				Mean & $\mu_1 = \mu_2$ & $a_{\mathrm{two},\bm{n}}(\bar{X}_1 - \bar{X}_2)$  \\
				Covariance & $\ga_1 = \ga_2$ & $a_{\mathrm{two},\bm{n}}(\widehat{\ga}_1 - \widehat{\ga}_2)$ \\
				Eigenfunction & $\phi_{1j} = \phi_{2j}$ & $a_{\mathrm{two},\bm{n}}(\hat{\phi}_{1j} - \hat{\phi}_{2j})$ \\ \hline
			\end{tabular}
		\end{table}

		\item 
		\textit{Test for equality between several means.}  \ \
		For general $K$, we are interested in testing the equality between several means stated as $H_{K,\mathrm{mean},0}:\mu_1=\cdots=\mu_K$.
		We re-parametrize this problem with $$\bm{\theta} = (\mu_2-\mu_1, \dots, \mu_K-\mu_1)^\top$$
		so that $H_{K,\mathrm{mean},0}$ is equivalent to the statement that $\bm{\theta} = \bm{0}$. 
		Then, we consider the multivariate functional statistic
		$$T_{K,\mathrm{mean},\bm{n}} \equiv (a_{2\bm{n}}(\bar{X}_2 - \bar{X}_1), \dots, a_{K\bm{n}}(\bar{X}_K-\bar{X}_1))^\top$$ with
		$a_{k\bm{n}} = (n_k^{-1} + n_1^{-1})^{-1/2}$ for $k=2,\dots,K$.
		Other parametrizations are possible such as $\bm{\theta} = (\mu_1-\mu_K, \dots, \mu_{K-1}-\mu_K)^\top$.
		Some selective studies about this problem include \cite{CFF04, ZL14, GS15, PS16, ZCWZ19}.
		
	\end{enumerate}
	
	%	\begin{enumerate}[(a)]
		%		\item Two-sample mean test.
		%		The null hypothesis is $H_{\mathrm{two},\mathrm{mean},0}:\mu_1 = \mu_2$ and we can use $T_{\mathrm{two},\mathrm{mean},\bm{n}} = a_{\bm{n}}(\bar{X}_1 - \bar{X}_2)$ for our functional statistic.
		%		\tred{Popular depth functions can be used to construct the depth p-value}
		%		
		%		\item Two-sample covariance test.
		%		The null hypothesis is $H_{\mathrm{two},\mathrm{cov},0}:\ga_1 = \ga_2$ and we can use $T_{\mathrm{two},\mathrm{cov},\bm{n}} =a_{\bm{n}}(\widehat{\ga}_1 - \widehat{\ga}_2)$ for the operator statistic (valued in $\bb_2(\HH)$) to test $H_{\mathrm{two},\mathrm{cov},0}$. 
		%		\tred{Depth functions for bi-variate domain functions can be used to construct the depth p-value}
		%		
		%		\item Two-sample test for eigenfunctions.
		%		The null hypothesis is $H_{\mathrm{two},\mathrm{ef},j,0}:\phi_{1j} = \phi_{2j}$
		%		and we can use $T_{\mathrm{two},\mathrm{ef},j,\bm{n}} = a_{\bm{n}}(\hat{\phi}_{1j} - \hat{\phi}_{2j})$ for functional statistic to test $H_{\mathrm{two},\mathrm{ef},j,0}$.
		%		
		%	\end{enumerate}
\end{eg}

\begin{eg}[Functional regression] \label{egFunReg}
	We illustrate the wide applicability of the proposed depth statistic with several functional regression models with functional responses.
	Below, let $\{Y_i\}_{i=1}^n$ and $\{\e_i\}_{i=1}^n$ respectively denote functional responses and functional errors that take values in $\HH = L^2([0,1])$, with sample size $\bm{n}=n$,
	and $\ap \in \HH$ represent the intercept function,
	while the structures of regressors can be different depending on models.
	\begin{enumerate}[(a)]

		\item \textit{Functional concurrent regression and function-on-scalar regression}: 
		\begin{align*}
			Y_i(t) = \ap(t) +  \bm{x}_i(t)^\top \bm{\beta}(t) + \e_i(t), \quad t \in [0,1], \quad i=1,\dots,n,
		\end{align*}
		where $\bm{x}_i$ is a $p$-dimensional functional regressor with $\bm{x}_i(t) = (x_{i1}(t), \dots, x_{ip}(t))^\top$ for each $t \in [0,1]$. 
		With $\bm{\beta}(t) = (\beta_1(t), \dots, \beta_p(t))^\top$ for $t \in [0,1]$, we write $\bm{\beta}_1$ for the later part of the coefficient function $\bm{\beta}$ such that $\bm{\beta}_1(t) \equiv (\beta_{q+1}(t), \dots, \beta_p(t))^\top$ for each $t \in [0,1]$. 
		We may be interested in a partial test $H_0:\bm{\beta}_1 = \bm{0}$.
		For this testing, we may adopt the classical least squares estimator pointwise.
		To explain, we define the following quantities for each $t \in [0,1]$:
		\begin{align*}
			\bm{X}_0(t) \equiv [x_{ij}(t)]_{1 \leq i \leq n, 0 \leq j \leq q},
			\quad
			\bm{X}_1(t) \equiv [x_{ij}(t)]_{1 \leq i \leq n, q+1 \leq j \leq p},
			\quad
			\bm{X}(t) = \begin{bmatrix} \bm{X}_0(t) & \bm{X}_1(t) \end{bmatrix},
		\end{align*}
		where $x_{i0}(t) = 1$,
		along with projection matrix
		$\Pi_{\bm{X}_0(t)} \equiv \bm{X}_0(t)\{\bm{X}_0(t)^\top \bm{X}_0(t)\}^{-1} \bm{X}_0(t)^\top$
		and the orthogonal complement data matrix
		$\bm{X}_{1|0}(t) \equiv \bm{X}_1(t) - \Pi_{\bm{X}_0(t)} \bm{X}_1(t)$.
		Then, the (pointwise) least squares estimator $\hat{\bm{\beta}}_1$ for $\bm{\beta}_1$ is defined by
		\begin{align*}
			\hat{\bm{\beta}}_1(t) \equiv \{\bm{X}_{1|0}(t)^\top \bm{X}_{1|0}(t)\}^{-1} \bm{X}_{1|0}(t)^\top \bm{Y}(t), \quad t \in [0,1].
		\end{align*}
		The multivariate functional statistic for $H_0:\bm{\beta}_1 = \bm{0}$ is then $T_{\mathrm{fcr},n} \equiv \sqrt{n}\hat{\bm{\beta}}_1$. 
		This framework is studied by \cite{WPCL18, GM22}.
		In particular, it includes (i) testing in function-on-scalar regression models \citep{SF04, ZC07, Zhang11, CLO17} when we consider constant functions for all regressors $\{\bm{x}_i\}_{i=1}^n$, (ii) functional ANOVA setups for comparing several means as in \autoref{egFANOVA}(b) with suitable design matrix, and (iii) hence two-sample mean tests described in \autoref{egFANOVA}(a).

		\newcommand{\BB}{\mathsf{B}}
		
		\item \textit{Function-on-function regression (FoFR)}:
		\begin{align} \label{eq_fofr_model}
			Y_i = \ap + \BB X_i + \e_i, \quad i=1,\dots,n,
		\end{align}
		where $\BB:\HH\to\HH$ is a bounded linear operator on $\HH$ and $X_i$ is a random element that take values in $\HH$.
		Although direct inference for $\BB$ may not be feasible since CLT for $\BB$ itself may not exist \citep{CM13},
		we can conduct inference for mean response $\mu(X_0) \equiv \ap + \BB X_0$ at a new functional regressor $X_0$. 
		Namely, we may be interested in testing whether the mean response $\mu(X_0)$ at $X_0$ is equal to the global mean $\eo[Y]$,
		which can be represented by the null hypothesis $H_0:\mu(X_0) = \eo[Y]$ \cite[cf.][]{Yeon26}.
		This is equivalent to the statement $\theta \equiv \BB (X_0 - \eo[X]) = 0$.
		A functional statistic is then $T_{\mathrm{fofr},n} \equiv \sqrt{n/\hat{s}}\hat{B}(X_0-\bar{X})$ with suitable estimator $\hat{B}$ for $B$ and scaling $\hat{s}$.
		This problem is treated by 
		\cite{Yeon26} with $L^2$ norm and \cite{DT24} with sup norm. 
		See \autoref{ssec_fofr} for more details about the test statistic developed by \cite{Yeon26}.
	\end{enumerate}
\end{eg}

In what follows, we focus on the two-sample functional mean tests in \autoref{egFANOVA}(a) and the mean response inference for FoFR in \autoref{egFunReg}(b), 
whose bootstrap procedures are described in Sections~\ref{ssec_2test}--\ref{ssec_fofr}, respectively.

\subsection{Two-sample functional mean tests} \label{ssec_2test}

We explain a bootstrap method for two-sample inference for functional data.
Following the setup in \autoref{egFANOVA}(a), 
we consider the two-sample test statistic defined by
\begin{align} \label{eq_stat_two}
	T_{\mathrm{two},\bm{n}} = \sqrt{n_1n_2 \over n_1+n_2} (\bar{X}_1 - \bar{X}_2)
\end{align}
with $\bm{n} \equiv (n_1,n_2)$. 
For the resampling implementation of our depth-based two-sample test, we adopt the following residual-type bootstrap method; see \cite{PS16,BHK09,DKA20} for other bootstrap  two-sample functional mean tests.
%We point out that our method is generally applicable as long as the resampling method consistently approximates the sampling distribution of the statistic.
With the pooled sample mean $\bar{X} \equiv (n_1+n_2)^{-1} (n_1\bar{X}_1 + n_2\bar{X}_2)$, 
for each $k=1,2$,
the bootstrap resamples are defined by $X_{ki}^* \equiv \bar{X} + \e_{ki}^*$,
where bootstrap errors $\e_{k1}^*, \dots, \e_{kn_k}^*$ are iid uniform draws from the (centered) residuals $\{\hat{\e}_{ki} \equiv X_{ki}-\bar{X}_k \}_{i=1}^{n_k}$. 
The bootstrap statistic is then defined by
\begin{align} \label{eq_stat_two_bts}
	T_{\mathrm{two},\bm{n}}^* \equiv \sqrt{n_1n_2 \over n_1+n_2} (\bar{X}_1^* - \bar{X}_2^*),
\end{align}
where $\bar{X}_k^* \equiv n_k^{-1} \sum_{i=1}^{n_k} X_{ki}^*$ for $k=1,2$.
For completeness, we provide the asymptotic and bootstrap results for this functional two-sample framework along with its proof (\autoref{App2test} of the supplement), 
as we have not found an exact reference. 

\begin{thm} \label{thm_two}
	Suppose that $n_1\to\infty$ and $n_2\to\infty$ along with
	$n_1/(n_1+n_2) \to \zeta \in (0,1)$.
	We further assume that each random function $X_{ki}$ has finite second moment with covariance operator $\ga_k$. 
	By denoting $P_{\mathrm{two},\bm{n}}$ and $\widehat{P}_{\mathrm{two},\bm{n}}$ for the distributions of $T_{\mathrm{two},\bm{n}}$ in \eqref{eq_stat_two} and $T_{\mathrm{two},\bm{n}}^*$ in \eqref{eq_stat_two_bts}, respectively,
	we have the following:
	\begin{enumerate}[(a)]
		\item $\pi(P_{\mathrm{two},\bm{n}}, P_{\mathrm{two},\zeta}) \to 0$, where $P_{\mathrm{two},\zeta}$ represents the Gaussian distribution on $\HH$ with mean zero and covariance $(1-\zeta)\ga_1 + \zeta\ga_2$; and
		
		\item $\pi(\widehat{P}_{\mathrm{two},\bm{n}}, P_{\mathrm{two},\bm{n}}) \to 0$ almost surely.
		
	\end{enumerate}
\end{thm}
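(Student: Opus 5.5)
Part (a) will come from the central limit theorem for iid random elements in a separable Hilbert space, and part (b) from the conditional version of the same argument along almost every data sequence. Under $H_0$ (or after centering at $\mu_1-\mu_2$) we write
\begin{align*}
T_{\mathrm{two},\bm{n}} = \sqrt{\tfrac{n_2}{n_1+n_2}}\, n_1^{-1/2}\sum_{i=1}^{n_1}(X_{1i}-\mu_1) - \sqrt{\tfrac{n_1}{n_1+n_2}}\, n_2^{-1/2}\sum_{i=1}^{n_2}(X_{2i}-\mu_2).
\end{align*}
Since $\eo\|X_{k1}\|^2<\infty$, the Hilbert-space CLT gives $n_k^{-1/2}\sum_i (X_{ki}-\mu_k) \xrightarrow{\mathsf{d}} G_k \sim \nd(0,\ga_k)$. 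The two samples are independent, so the pair converges jointly to independent $(G_1,G_2)$. The scalar weights converge to $\sqrt{1-\zeta}$ and $\sqrt{\zeta}$, so the continuous mapping theorem together with Slutsky yields
\begin{align*}
T_{\mathrm{two},\bm{n}} \xrightarrow{\mathsf{d}} \sqrt{1-\zeta}\,G_1-\sqrt{\zeta}\,G_2,
\end{align*}
which is Gaussian with covariance $(1-\zeta)\ga_1+\zeta\ga_2$. Because $\pi$ metrizes weak convergence, this proves (a).

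For (b), note that $\bar{X}_k^*-\bar{X} = n_k^{-1}\sum_i \e_{ki}^*$, so $T^*_{\mathrm{two},\bm{n}}$ has the same form as above. The only change is that $X_{ki}-\mu_k$ is replaced by $\e_{ki}^*$, which are iid draws from the centered empirical residual distribution $\widehat{Q}_k$: it has mean zero and covariance $\widehat{\ga}_k$. I will prove $\pi(\widehat{P}_{\mathrm{two},\bm{n}},P_{\mathrm{two},\zeta})\to0$ almost surely and combine it with (a) through the triangle inequality. Fix a data sequence in the probability-one event on which three things hold: $\widehat{\ga}_k\to\ga_k$ in trace norm (strong law of large numbers in the separable Banach space of trace-class operators, using $\eo\|X\|^2<\infty$); $n_k^{-1}\sum_i\|\hat\e_{ki}\|^2\to \eo\|X_{k1}-\mu_k\|^2$; and the empirical measures converge weakly (Varadarajan).

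On this event I apply a triangular-array CLT in Hilbert space to the rows $\{n_k^{-1/2}\e_{ki}^*\}_{i}$, for instance the Lindeberg-type CLT (Araujo--Giné, or Kundu--Majumdar--Mukherjee). This requires three conditions. (i) The covariance operators $\widehat{\ga}_k$ converge to $\ga_k$ in trace norm; this also supplies the tightness condition, namely uniform control of the tail $\sum_{j>J}\langle \widehat{\ga}_k e_j,e_j\rangle$ for a fixed orthonormal basis. (ii) A Lindeberg condition $n_k^{-1}\sum_i \|\hat\e_{ki}\|^2\I(\|\hat\e_{ki}\|>\e\sqrt{n_k})\to0$. (iii) Finite-dimensional convergence, which follows from the multivariate Lindeberg CLT given (ii). Under these conditions $n_k^{-1/2}\sum_i\e^*_{ki}\xrightarrow{\mathsf d}\nd(0,\ga_k)$ conditionally. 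Bootstrap draws in the two groups are independent, so the same continuous-mapping step as in (a) finishes the argument.

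The main obstacle is verifying the Lindeberg condition (ii) almost surely using only second moments. My plan is to bound $\|\hat\e_{ki}\|\le\|X_{ki}-\mu_k\|+\|\bar X_k-\mu_k\|$. The second term is $o(\sqrt{n_k})$ almost surely, so it suffices to show that $n^{-1}\sum_{i\le n}\|X_i-\mu\|^2\I(\|X_i-\mu\|>M)$ has almost-sure limit $\eo[\|X-\mu\|^2\I(\|X-\mu\|>M)]$ for each fixed $M$, by the SLLN. Letting $M\to\infty$ after $n\to\infty$ then gives the claim, since $\e\sqrt{n_k}$ eventually exceeds any fixed $M$. A secondary technical point is the almost-sure trace-norm convergence of $\widehat\ga_k$; if a direct SLLN citation is awkward, I will instead prove convergence of the trace, $\mathrm{tr}\,\widehat\ga_k\to\mathrm{tr}\,\ga_k$, together with weak operator convergence on a countable dense set, which together give tightness of the bootstrap laws.
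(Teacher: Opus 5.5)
Your proposal is correct and follows essentially the same route as the paper. Part (a) uses the Hilbert-space CLT per group, combined through independence. Part (b) works on the probability-one event where $\widehat{\ga}_k\to\ga_k$: a conditional CLT for $\sqrt{n_k}\bar{\e}_k^*$ comes from tightness plus convergence of one-dimensional projections, and the result is then combined with (a).

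Your version is more careful at two points the paper glosses over. First, the paper's bound $\eo^*\|\sqrt{n_k}\bar{\e}_k^*\|^2=\mathrm{tr}(\widehat{\ga}_k)$ only shows the norm is bounded in probability, which is not tightness in infinite dimensions; your trace-norm convergence of $\widehat{\ga}_k$ supplies the needed uniform tail control. Second, you explicitly verify the almost-sure Lindeberg condition that the paper's projection CLT tacitly assumes.
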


\subsection{FoFR mean response inference} \label{ssec_fofr}

\newcommand{\BB}{\mathsf{B}}

We illustrate the depth statistic in the FoFR setting described in \autoref{egFunReg}(b).
For this, we follow the estimation and bootstrap inference procedure developed in \cite{Yeon26}. 
We observe $n$ functional response-regressor pairs $\{(Y_i, X_i)\}_{i=1}^n$ from the FoFR model in \eqref{eq_fofr_model},
and a new functional predictor $X_0 \ed X_1$ is given.
The covariance operator of $\{X_i\}_{i=1}^n$ and the cross-covariance operator bewteen $\{Y_i\}_{i=1}^n$ and $\{X_i\}_{i=1}^n$ are respectively given by
\begin{align*}
	\widehat{\ga} & \equiv n^{-1} \sum_{i=1}^n(X_i - \bar{X})^{\otimes 2}, \\
	\widehat{\Dt} & \equiv n^{-1} \sum_{i=1}^n \{ (Y_i - \bar{Y}) \otimes (X_i - \bar{X}) \},
\end{align*}
where $\bar{X} \equiv n^{-1} \sum_{i=1}^n X_i$ and $\bar{Y} \equiv n^{-1} \sum_{i=1}^n Y_i$ represent the sample means. 
The covariance operator $\widehat{\ga}$ admits spectral decomposition \citep[Chapter~4]{HE15} as
$\hat{\ga} = \sum_{j=1}^n \hat{\g}_j (\hat{\phi}_j \otimes \hat{\phi}_j)$,
where $(\hat{\g}_j, \hat{\phi}_j)$ is the $j$-th eigenvalue-eigenfunction pair of $\widehat{\ga}$.
To estimate the slope estimator $\BB$,
we regularize the inversion of $\widehat{\ga}$ by truncating it up to the truncation level $J = J_n$.
Namely, with a truncated inverse covariance operator $\widehat{\ga}_J^{-1} \equiv \sum_{j=1}^J \hat{\g}_j^{-1} (\hat{\phi}_j \otimes \hat{\phi}_j)$, 
the functional principal component regression (FPCR) estimator $\widehat{\BB}_J$ for $\BB$ is defined by
\begin{align} \label{eq_fofr_est}
	\widehat{\BB}_J \equiv \widehat{\Dt} \widehat{\ga}_J^{-1}.
\end{align}
Upon scaling, the test statistic for $H_0:\mu(X_0) = \eo[Y]$ is then defined  by
\begin{align} \label{eq_fofr_stat}
	T_{\mathrm{fofr},n}(X_0) \equiv \sqrt{n \over \hat{\tau}_J(X_0)} \widehat{\BB}_J(X_0 - \bar{X}),
\end{align} 
where the scaling term is defined by
\begin{align*}
	\hat{\tau}_J(x) \equiv \sum_{j=1}^J \hat{\g}_j^{-1} \langle x- \bar{X}, \hat{\phi}_j \rangle^2, \quad x \in \HH.
\end{align*}

The following residual bootstrap is shown to be valid to  approximate the distribution of $T_n(X_0)$ in \eqref{eq_fofr_stat}.
We first compute the residuals as $\hat{\e}_{i,J_{\mathrm{res}}} \equiv Y_i - \bar{Y} - \widehat{\BB}_{J_{\mathrm{res}}}(X_i - \bar{X})$ for additional truncation level $J_{\mathrm{res}} = J_{\mathrm{res},n}$ with $\hat{\BB}_{J_{\mathrm{res}}} \equiv \widehat{\Dt} \widehat{\ga}_{J_{\mathrm{res}}}^{-1}$. 
Then, the bootstrap errors $\{\e_i^*\}_{i=1}^n$ are drawn uniformly from  $\{\e_{i,K}\}_{i=1}^n$ with replacement and we compute the bootstrap responses akin to \eqref{eq_fofr_model} as
\begin{align*}
	Y_i^* \equiv \bar{Y} + \widehat{B}_{J_{\mathrm{cen}}}(X_i - \bar{X}) + \e_i^*, \quad i=1,\dots,n,
\end{align*}
where $\widehat{\BB}_{J_{\mathrm{cen}}} \equiv \widehat{\Dt} \widehat{\ga}_{J_{\mathrm{cen}}}^{-1}$ plays a role of the true parameter in the bootstrap world with another extra truncation $J_{\mathrm{cen}} = J_{\mathrm{cen},n}$.
The bootstrap slope estimator $\widehat{\BB}_J^*$ is constructed analogously to \eqref{eq_fofr_est} as
\begin{align*}
	\widehat{\BB}_J^* \equiv \widehat{\Dt}^* \widehat{\ga}_J^{-1},
\end{align*}
where $\widehat{\Dt}^* \equiv n^{-1} \sum_{i=1}^n \{(Y_i^* - \bar{Y}^*) \otimes (X_i - \bar{X})\}$ with $\bar{Y}^* \equiv n^{-1} \sum_{i=1}^n Y_i^*$.
The bootstrap test statistic is then defined by
\begin{align} \label{eq_fofr_stat_bts}
	T_{\mathrm{fofr},n}^*(X_0) \equiv \sqrt{n \over \hat{\tau}_J(X_0)} (\widehat{\BB}_J^* - \widehat{\BB}_{J_{\mathrm{cen}}}) (X_0 - \bar{X}).
\end{align}

Regarding the mean response inference in this function-on-function regression framework, the following theorem is established by \cite{Yeon26}. 

\begin{thm} \label{thm_fofr}
	Write $P_{\mathrm{fofr},n}$ and $\widehat{P}_{\mathrm{fofr},n}$ for the distributions of $T_{\mathrm{fofr},n}(X_0)$ in \eqref{eq_fofr_stat} and $T_{\mathrm{fofr},n}^*(X_0)$ in \eqref{eq_fofr_stat_bts}, respectively.
	Under the assumptions stated in \autoref{AppFoFRcond} of the supplement, we have the following:
	\begin{enumerate}[(a)]
		\item $\pi(P_{\mathrm{fofr},n}, P_{\mathrm{fofr},\sa}) \to 0$,
		where $P_{\mathrm{fofr},\sa}$ denotes the Gaussian random function on $\HH$ with mean zero 0 and covariance operator $\sa \equiv \eo[\e \otimes \e|X]$; and
		
		\item $\pi(\widehat{P}_{\mathrm{fofr},n}, P_{\mathrm{fofr},n}) \xrightarrow{\pr} 0$.
		
	\end{enumerate}
	
\end{thm}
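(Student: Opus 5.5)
Since \autoref{thm_fofr} is taken from \cite{Yeon26}, I would reproduce its argument in outline. The idea is to write both statistics as a conditionally weighted sum of independent errors plus a bias term, and to prove weak convergence conditionally on the regressors. Put $x_0 \equiv X_0-\bar{X}$ and $\widehat{\Pi}_J \equiv \sum_{j\le J}\hat{\phi}_j\otimes\hat{\phi}_j$. From the model \eqref{eq_fofr_model},
\[
\widehat{\Dt} = \BB\widehat{\ga} + n^{-1}\sum_{i=1}^n (\e_i-\bar{\e})\otimes(X_i-\bar{X}),
\]
so $\widehat{\BB}_J x_0 = \BB\widehat{\Pi}_J x_0 + n^{-1}\sum_i \e_i\, \langle X_i-\bar{X}, \widehat{\ga}_J^{-1}x_0\rangle$. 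The $\bar{\e}$ term drops because $\sum_i (X_i-\bar{X})=0$. Under $H_0$ we have $\BB(X_0-\eo[X])=0$, which gives
\[
T_{\mathrm{fofr},n}(X_0) = \sum_{i=1}^n w_{ni}\,\e_i + R_n,
\qquad
w_{ni} \equiv \frac{\langle X_i-\bar{X}, \widehat{\ga}_J^{-1}x_0\rangle}{\sqrt{n\hat{\tau}_J(X_0)}},
\]
where $R_n$ collects $\sqrt{n/\hat{\tau}_J}\,\BB(\widehat{\Pi}_J-I)x_0$ and $\sqrt{n/\hat{\tau}_J}\,\BB(\eo[X]-\bar{X})$. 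The scaling is designed so that $\sum_i w_{ni}^2 = \langle x_0,\widehat{\ga}_J^{-1}x_0\rangle/\hat{\tau}_J(X_0) = 1$ exactly.

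For part (a), I would condition on $\{X_i\}_{i=0}^n$. The errors $\e_i$ are independent of the regressors, mean zero, with covariance $\sa$. The conditional covariance of the main term is therefore exactly $\sa$ for every $n$. I would then apply a Lindeberg--Feller central limit theorem for triangular arrays in separable Hilbert spaces, for instance via finite-dimensional projections plus tightness, the latter following from the trace of the covariance being constant. This needs the Lindeberg condition $\sum_i w_{ni}^2\,\eo[\|\e\|^2\I(\|w_{ni}\e\|>\eta)]\to0$, which follows once $\max_i |w_{ni}|\xrightarrow{\pr}0$. Convergence conditional in probability, combined with dominated convergence, gives $\pi(P_{\mathrm{fofr},n}, P_{\mathrm{fofr},\sa})\to0$, provided $R_n\xrightarrow{\pr}0$.

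For part (b), the bootstrap statistic has the same structure:
\[
T^*_{\mathrm{fofr},n}(X_0) = \sum_i w_{ni}\,\e_i^* + \sqrt{n/\hat{\tau}_J}\,\widehat{\BB}_{J_{\mathrm{cen}}}(\widehat{\Pi}_J - I)x_0 ,
\]
with the same weights $w_{ni}$. Here the $\e_i^*$ are iid from the centered residual empirical distribution, with covariance $\widehat{\sa}$. The remainder vanishes identically when $J=J_{\mathrm{cen}}$, and otherwise needs to be shown $o_{\pr}(1)$. For the main term I would repeat the conditional Lindeberg argument under $\pr^*$. The key inputs are that $\widehat{\sa}\to\sa$ in trace norm in probability, and a uniform-integrability bound for the residual norms. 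Both should follow from $\widehat{\BB}_{J_{\mathrm{res}}}$ being consistent in an $L^2$-prediction sense, so that $n^{-1}\sum_i\|\hat{\e}_{i,J_{\mathrm{res}}}-\e_i\|^2\xrightarrow{\pr}0$. This yields $\pi(\widehat{P}_{\mathrm{fofr},n}, P_{\mathrm{fofr},\sa})\xrightarrow{\pr}0$. The triangle inequality for $\pi$, combined with (a), then gives (b).

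The main obstacle is the remainder: showing $R_n$ and its bootstrap analogue are $o_{\pr}(1)$, and that $\max_i|w_{ni}|\to0$. Both require the assumptions in \autoref{AppFoFRcond}: growth rates for $J$, $J_{\mathrm{res}}$, $J_{\mathrm{cen}}$, eigenvalue gaps and decay of $\ga$, and smoothness of $\BB$ relative to the eigenbasis. I would first bound $\hat{\tau}_J(X_0)$ from below, since it grows like $J$. Next I would control $\|\BB(\widehat{\Pi}_J-I)x_0\|$ using perturbation bounds for $\hat{\phi}_j-\phi_j$ together with the tail $\sum_{j>J}\|\BB\phi_j\|^2\g_j$. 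Finally I would check that $\sqrt{n/J}$ times this quantity vanishes; this rate balancing is where the real work lies.
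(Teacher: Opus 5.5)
The paper gives no proof of its own here---it imports the theorem from \cite{Yeon26} under the conditions of \autoref{AppFoFRcond}---and your sketch is a correct reconstruction of that argument: the decomposition with $\sum_i w_{ni}^2=1$, a conditional Lindeberg CLT, consistency of the residual covariance $\widehat{\sa}$ for the bootstrap, and the triangle inequality for $\pi$. You leave the technical core to that reference just as the paper does, namely the lower bound on $\hat{\tau}_J(X_0)$, $\max_i|w_{ni}|\xrightarrow{\pr}0$, and the bias term; note that for the bias a bound through $\|\widehat{\Pi}_J-\Pi_J\|$ alone (with $\Pi_J$ the population projection) does not vanish after multiplying by $\sqrt{n/J}$, so the decay of $\|\mathsf{B}\phi_k\|$ encoded in (B) has to be exploited inside the expansion of $\widehat{\Pi}_J-\Pi_J$. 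Two small corrections: the stated hypotheses give only conditional moment conditions ($\eo[\e\otimes\e|X]=\sa$ and the projection-wise bound (E)), not independence of $\e$ and $X$, so the Lindeberg step should be run on each $\langle \sum_i w_{ni}\e_i, v\rangle$ via (E), with tightness coming from the conditional covariance being the fixed trace-class operator $\sa$ (a constant trace alone would not give tightness); and the bootstrap remainder $\widehat{\mathsf{B}}_{J_{\mathrm{cen}}}(\widehat{\Pi}_J-I)x_0$ is identically zero for every $J\ge J_{\mathrm{cen}}$, not only for $J=J_{\mathrm{cen}}$, because $\widehat{\ga}_{J_{\mathrm{cen}}}^{-1}(I-\widehat{\Pi}_J)=0$, which is exactly the case enforced by the paper's truncation rule.
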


\section{Functional depth} \label{secFdepth}

%We refer to \cite{GN17} for a general review of their theoretical properties. 

We review four representative functional depths among others:
kernel, regularized halfspace, integrated, and infimal, depths. 
The former two are defined on a general separable Hilbert space $\HH$,
hence incorporating any Hilbertian random elements \cite[cf.][]{JP20}.
The latter two are defined on $C([0,1])$ and successfully extended to multivariate functional data possibly on multivariate domain.
We first introduce the kernel depth and establish that Conditions~\ref{condContPoint}--\ref{condLimitCont} hold for it.
Although it is unclear that Conditions~\ref{condContPoint}--\ref{condContProb} are satisfied for the second depth and the last two depths are defined on a Banach space,
we will shortly review them and include the corresponding depth statistics in our numerical studies for a broader comparison.
%Other popular functional depths include 
%modified band \citep{LR09}
%spatial \citep{CC14b}
%extremal \citep{NN16}
%Mahalanobis \citep{BBC20}
%$L^2$ \citep{RC24}
%spherical \citep{MN25}
%depths, from classical to recent depth proposals.
%In this section, we will use $Q$ to represent generic notation for probability distributions on a function space.

\subsection{Kernel depth} \label{ssecFdepthKD}

We first introduce a functional depth which we call the kernel depth \citep{CCF06}.
Let $K:[0,\infty)\to[0,\infty)$ be a continuous and non-increasing function with $K(0)>0$ and $\lim_{t\to\infty} K(t) = 0$,
which we call a kernel (function).
Popular kernels include the Gaussian kernel given by $K(t) = e^{-t^2/2}$ for $t\in [0,\infty)$.
The \emph{kernel depth} (KD) of $x \in \HH$ (induced by $K$) with respect to a probability measure $Q$ on $\HH$ is defined by
\begin{align} \label{eqKDpop}
	D_K(x; Q) = \eo[K(\|x-X\|)]
\end{align}
where $X \sim Q$. 
Traditionally, $D_K$ has been called \emph{$h$-mode depth},
since its practical version involves a bandwidth $h \in (0,\infty)$.
Namely, upon defining a normalized kernel $K_h$ as $K_h(u) = h^{-1} K(h^{-1}u)$ for $u \in [0,\infty)$,
then the sample KD of $x \in \HH$ is defined by
\begin{align} \label{eqKDsample}
	D_{K_h}(x;\widehat{Q}_n)
	= (nh)^{-1} \sum_{i=1}^n K(h^{-1}\|x-X_i\|),
\end{align}
where $\widehat{Q}_n$ stands for the empirical distribution defined through $\{X_i\}_{i=1}^n$. 
Here, the bandwidth $h$ can be data-adaptively selected.
For example, for $u \in (0,1)$, we may take $u$-th quantile of the observed distances $\{\|X_{i'}-X_i\|: i \neq i'\}$,
where a KD with smaller $u$ tends to be sensitive to shapes of the functional observations.
In addition to its computational scalability, more advantages of the KD $D_K$ are recently found:
it can characterize a distribution on $\HH$ and exhibit uniform consistency over the entire function space $\HH$ with a suitable kernel.
While we direct readers to \cite{WN25} for a comprehensive review of the KD,
we provide some useful results for our methodology.

\begin{thm} \label{thmKDcont}
	Let $K:[0,\infty)\to\R$ be a bounded and continuous function. 
	Regarding the KD $D_K(x;Q)$ for $x \in \HH$ and $Q \in \pp(\HH)$ defined in Equation~\eqref{eqKDpop},
	we have the following two continuity properties.
	\begin{enumerate}[(a)]
		\item 
		The map $x \mapsto D_K(x;Q)$ is continuous uniformly over $Q \in \pp(\HH)$ in the sense that, 
		for a sequence $\{x_n\}$ in $\HH$ and $x \in \HH$, as $n\to\infty$, if $x_n\to x$, then we have
		\begin{align*}
			\sup_{Q \in \pp(\HH)}|D_K(x_n;Q) - D_K(x;Q)| \to 0.
		\end{align*}
		\item 
		The map $Q \mapsto D_K(\cdot; Q)$ is continuous uniformly over $x \in \HH$ in the sense that, 
		for a sequence $\{Q_n\}$ in $\pp(\HH)$ and $Q \in \pp(\HH)$,
		as $n\to\infty$, if $Q_n$ weakly converges to $Q$,
		then we have
		\begin{align*}
			\sup_{x \in \HH}|D_K(x;Q_n) - D_K(x;Q)| \to 0.
		\end{align*}
	\end{enumerate}
	
\end{thm}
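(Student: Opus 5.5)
The plan is to reduce both parts to a single property of $K$: a modulus of continuity $\omega(\delta) \equiv \sup\{|K(s)-K(t)|: s,t \ge 0,\ |s-t|\le \delta\}$ with $\omega(\delta)\to 0$ as $\delta \downarrow 0$, i.e.\ uniform continuity of $K$ on $[0,\infty)$.

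This is where I expect the only subtlety. Mere boundedness and continuity do not give it; $K(t)=\sin(t^2)$ is a counterexample. Part (a) essentially forces it, because taking $Q=\delta_y$ (a point mass at $y$) yields $\sup_y |K(\|x_n-y\|)-K(\|x-y\|)|$. I would therefore invoke the standing kernel assumptions of \autoref{ssecFdepthKD}: $K$ continuous, non-increasing, and $K(t)\to 0$. These give uniform continuity by a standard argument. Given $\e>0$, choose $M$ with $K(M)<\e/2$. Then $K$ is uniformly continuous on the compact interval $[0,M+1]$. Beyond $M$, monotonicity gives $|K(s)-K(t)| \le K(M) < \e$ for $s,t\ge M$. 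Combining the two regimes gives $\omega(\delta)\le \e$ for $\delta$ small. In addition, $\omega \le 2\sup|K| < \infty$.

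For (a), the reverse triangle inequality gives $\bigl|\|x_n-X\|-\|x-X\|\bigr| \le \|x_n-x\|$ pointwise. Hence for every $Q$,
\begin{align*}
|D_K(x_n;Q)-D_K(x;Q)| \le \eo\bigl[|K(\|x_n-X\|)-K(\|x-X\|)|\bigr] \le \omega(\|x_n-x\|).
\end{align*}
The right-hand side does not depend on $Q$ and tends to $0$, which proves (a).

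For (b), I would use the Skorokhod representation theorem, which is available because $\HH$ is separable (indeed Polish). Since $Q_n \Rightarrow Q$, there exist, on a common probability space, random elements $X_n\sim Q_n$ and $X\sim Q$ with $\|X_n-X\|\to 0$ almost surely. For any $x\in\HH$, the same triangle-inequality bound gives
\begin{align*}
|D_K(x;Q_n)-D_K(x;Q)| \le \eo\bigl[|K(\|x-X_n\|)-K(\|x-X\|)|\bigr] \le \eo\bigl[\omega(\|X_n-X\|)\bigr].
\end{align*}
This bound is free of $x$. Moreover $\omega(\|X_n-X\|)\to 0$ almost surely by uniform continuity of $K$, and it is dominated by $2\sup|K|$. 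Dominated convergence then sends the supremum over $x$ to $0$, which proves (b).

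An alternative to Skorokhod would be to cite the Ranga Rao result that weak convergence is uniform over uniformly bounded, equicontinuous families. Here the family $\{y\mapsto K(\|x-y\|)\}_{x\in\HH}$ is such a family, with common modulus $\omega$. The Skorokhod route is shorter and self-contained.
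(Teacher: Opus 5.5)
Your proof is correct, and it takes a genuinely different route from the paper's.

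\textbf{The paper's route.} The paper never introduces a modulus of continuity. It sets $\kappa(x,y)\equiv K(\|x-y\|)$, treats it as a reproducing kernel, and writes $D_K(x;Q)=\{\ee_\kappa Q\}(x)=\langle \ee_\kappa Q,\kappa(\cdot,x)\rangle_\kappa$. Both parts then follow from Cauchy--Schwarz in the RKHS. For (a) it uses $\|\kappa(\cdot,x_n)-\kappa(\cdot,x)\|_\kappa^2=2K(0)-2K(\|x_n-x\|)\to 0$ together with $\sup_Q\|\ee_\kappa Q\|_\kappa\le\sqrt{K(0)}$. For (b) it uses $\sup_x|D_K(x;Q_n)-D_K(x;Q)|\le\sqrt{K(0)}\,\|Q_n-Q\|_\kappa$, plus a lemma that weak convergence implies MMD convergence, proved via weak convergence of $Q_n\otimes Q_n$ and $Q_n\otimes Q$ to $Q\otimes Q$. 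So that MMD lemma plays the role your Skorokhod coupling plays.

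\textbf{What each route buys.} The RKHS route gives a clean quantitative MMD bound and ties the depth to the kernel-mean-embedding literature. However, it silently requires $\kappa$ to be positive definite (e.g.\ Gaussian $K$). Neither ``bounded and continuous'' nor the standing assumptions of \autoref{ssecFdepthKD} provide this. Under positive definiteness, uniform continuity of $K$ comes for free from the feature-map identity, which is why the paper never meets your $\sin(t^2)$ obstruction.

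Your route needs only boundedness and uniform continuity of $K$. Your point-mass observation shows uniform continuity is also necessary: for (a) take $Q=\delta_y$, and for (b) take $Q_n=\delta_{y_n}$ with $y_n\to y$. So you prove the sharp form of the result, with no kernel machinery. It covers non-positive-definite kernels, such as the compactly supported $K(t)=(1-t)_+$, which satisfy the standing assumptions but are out of reach of the RKHS argument.

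If you want a rate matching the Prohorov formulation of Condition~\ref{condContProb}, replace Skorokhod by Strassen's coupling. Then $\pi(Q_n,Q)<\e$ gives $\sup_x|D_K(x;Q_n)-D_K(x;Q)|\le\omega(\e)+2\e\sup_{t\ge 0}|K(t)|$.
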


The proof of \autoref{thmKDcont} is given in \autoref{AppContKD} of the supplement.
\autoref{thmKDcont} basically indicates that the KD satisfies Conditions~\ref{condContPoint}--\ref{condContProb}. 
These two results are not clearly established in the literature, 
although a uniform consistency by the empirical distribution is given in \citet[Theorem~5]{WN25}. 
It is important to note that the KD is useful especially for functional (infinite-dimensional) data,
because such convergence results of functional depths uniform on an entire space have been rarely found.
In addition, the continuity of the KD in probability measures is closely related to that in the so-called maximum mean discrepancy in Machine Learning literature \cite[cf.][]{SBSM23}. 
While we do not attempt to thoroughly study this direction, 
we refer to \cite{WN25} for more details about this relationship.

We close this section by providing an important example where the KD additionally satisfies Condition~\ref{condLimitCont}.
To summarize, using KD, Theorems~\ref{thmDepthStat}--\ref{thmDepthPval} hold (i.e., the depth statistic is valid) for the inference problems in Sections~\ref{ssec_2test}--\ref{ssec_fofr}.
\begin{eg} \label{egKDlimitCont}
	Let $P \in \pp(\HH)$ be the Gaussian with mean element $0 \in \HH$ and a covariance operator $\sa:\HH\to\HH$.
	Then, by \citet[Example~1]{WN25},
	the kernel depth of $x \in \HH$ with respect to $P$ is explicitly written as
	\begin{align*}
		D_K(x;P)
		= \left\{ \prod_{j=1}^\infty (1+\s_j) \right\}^{-1/2} \exp \left[ - {1 \over 2} \sum_{j=1}^\infty {\langle x, \psi_j \rangle^2 \over 1+\s_j} \right],
	\end{align*}
	where $(\s_j,\psi_j)$ denotes the $j$-th eigenvalue-eigenfunction pair of $\sa$. 
	Let $T \sim P$, which is the case in most asymptotic distributional theories.
	Then, since $\{\s_j^{-1/2}\langle T, \psi_j \rangle\}_{j=1}^\infty$ is a sequence of independent standard normal variables,
	the random variable
	\begin{align*}
		V \equiv \sum_{j=1}^\infty {\langle T, \psi_j \rangle^2 \over 1+\s_j}
		\ed \sum_{j=1}^\infty {\s_j \over 1+\s_j} (\s_j^{-1/2}\langle T, \psi_j \rangle)^2
	\end{align*}
	is a infinite mixture of independent $\chi^2$ random variables with one degree of freedom,
	which is hence continuous.
	Note that $\pr(0<V<\infty)=1$, implying that $\pr(D_K(T;P) > 0)=1$.
	Since the map $\R \ni a \mapsto e^{-a/2} \in \R$ is one-to-one, $D_K(T;P)$ is a continuous random variable.
\end{eg}

\subsection{Regularized halfspace depth} \label{ssecFdepthRHD}

It is well-known that a naive extension of the celebrated Tukey's halfspace depth \citep{Tukey75} can be degenerate in an infinite-dimensional space \citep{DGC11}.
To describe, we define the functional halfspace depth (FHD) $D_{\rm FHD}(x;Q)$ of $x \in \HH$ at $Q \in \pp(\HH)$ as
\begin{align*}
	D_{\rm FHD}(x;Q) = \inf_{v \in \HH:\|v\|=1} \pr(\langle X-x, v\rangle \geq 0),
\end{align*}
where $X \sim Q$. 
However, for example, when $Q$ is Gaussian, the FHD is turned out to be degenerate as $\pr(D_{\rm FHD}(T;Q) = 0) = 1$ for $T \sim Q$ independent of $X$ \cite[cf.][]{YDL25RHD}. 
In our context, unlike the KD (cf.~\autoref{egKDlimitCont}), 
the FHD may not satisfy Condition~\ref{condLimitCont}.

To address this degeneracy of the FHD, the regularized halfspace depth (RHD) was recently introduced by \cite{YDL25RHD}.
This generalize the classical halfspace depth on the Euclidean space to a general Hilbert space by regularizing the dispersions of projections in the projection direction set.
Let $Q \in \pp(\HH)$, $X \sim Q$ with $\eo[\|X\|^2]<\infty$. 
We denote $\ga(Q) \equiv \eo[(X-\eo[X])^{\otimes 2}]$ for its covariance operator.
To reduce computational burden due to spectral decomposition used in the original RHD, 
we suggest using the following modified version:
for $Q \in \pp(\HH)$ and $x \in \HH$, 
\begin{align} \label{eqRHDmod}
	D_{\mathrm{RHD},\ld}(x;Q) \equiv \inf_{v \in \HH:\|v\|=1, \|\ga(Q)^{1/2}v\| \geq \ld} \pr(\langle X-x, v \rangle \geq 0),
\end{align}
where $X \sim Q$. 
Here, $\ga(Q)^{1/2}$ stands for the square root operator of $\ga(Q)$;
see \cite{HE15} for general Hilbert space theory. 
Unlike the FHD exhibits degeneracy,
the RHD is shown to be degenerate by \cite{YDL25RHD},
which ensures its practicality.
For completeness, we provide a nondegeneracy result of the modified RHD \eqref{eqRHDmod} in \autoref{AppRHDnondege} of the supplement.
%In addition, the original RHD possesses many desirable depth properties including isometry invariance, vanishing at infinity, and uniform consistency over a totally bounded set.
The random projection approach is adopted to approximate the sample RHD. 
%In terms of the modified RHD \eqref{eqRHDmod},
Namely, we generate directions $\{v_m\}_{m=1}^M$ from $\{v \in \HH:\|v\|=1\}$, choose the $u$-th quantile of $\{ \|\widehat{\ga}^{1/2}v_m\| \}_{m=1}^M$ to use the regularization $\ld$, and compute the minimum of the empirical halfspace probabilities (cf.~\autoref{AppRHDimple} of the supplement).
Here, $\widehat{\ga} \equiv \ga(\widehat{Q}_n) = n^{-1} \sum_{i=1}^n (X_i-\bar{X}) ^{\otimes 2}$ denotes the sample covariance operator of the functional observations $\{X_i\}_{i=1}^n$, where $\bar{X} \equiv n^{-1} \sum_{i=1}^nX_i$ and $\widehat{Q}_n$ denotes the empirical distribution of $\{X_i\}_{i=1}^n$.
The quantile level $u$ may provide different perspectives for statistical analysis;
with smaller $u$, the resulting RHD considers more projection directions whose variances get smaller, and hence, becomes better sensitive to shapes of functional observations.
The RHD then leads to developing an effective method to detect both magnitude and shape functional outliers.

\subsection{Integrated and infimal depths} \label{ssecFdepthID}

The last two functional depths for continuous functional data we introduce are integrated \citep{FM01} and infimal \citep{Mosler13} depths.
These pointwise apply depth functions for univariate data and integrate these by taking either integral or infimum.
To explain, let $D_{\mathrm{uni}}$ be a univariate depth function, such as halfspace \citep{Tukey75} or simplicial \citep{Liu90} depths, used for constructing functional depths. 
The integrated (ITD) and infimal (IFD) depths of $x \in C([0,1])$ with respect to a probability distribution $Q$ on $C([0,1])$ are defined respectively by
\begin{align*}
	D_{\mathrm{ITD}}(x;Q) & \equiv \int_0^1 D_{\mathrm{uni}} (x(t);Q_t) dt, \\
	D_{\mathrm{IFD}}(x;Q) & \equiv \inf_{t \in [0,1]} D_{\mathrm{uni}} (x(t);Q_t).
\end{align*}
Their sample versions are defined by replacing the probability $Q$ by its empirical counterpart $\widehat{Q}_n$.
%infimal tie breaks extremal depth \cite{NN16}
%Uniform consistency results are established by \cite{NGOH16} and \cite{GN15} respectively for the ITD and IFD. 
The implementation of both depth functions are well-developed in R package \texttt{ddalpha} \citep{ddalpha}.
While many other functional depths have been proposed, these classical depth play benchmark roles in the literature.
These depths can be extended to multivariate functional data defined possibly on multivariate domain \cite[e.g.,][]{CHSV14, QDG22}.

\section{Numerical studies} \label{sec5}

All functional observations are generated based on the Karhunen--Lo\'{e}ve expansion:
\begin{align} \label{eqXgenKL}
	X = \mu + \sum_{j=1}^{J_{\mathrm{true}}} \g_j^{-1/2} \xi_j \phi_j
\end{align}
for $\mu \in \HH$.
Here, the number $J_{\mathrm{true}}=20$ of true eigenfunctions are fixed for all simulation studies.
The eigenvalues $\{\g_j\}_{j=1}^{J_{\mathrm{true}}}$, eigenfunctions $\{\phi_j\}_{j=1}^{J_{\mathrm{true}}}$, functional principal component (FPC) scores $\{\xi_j\}_{j=1}^{J_{\mathrm{true}}}$ will vary depending on cases.
The FPC scores are distributed as $\xi_j \ed \xi W_j$,
where $W_j \iid \nd(0,1)$ while the distribution of the latent variable $\xi$ is given by either $\xi =1$, $\xi \sim \nd(0,1)$, or $\xi \sim \mathsf{Exp}(1)-1$. 
We refer to these as N1-, NN-, NE-types of FPC scores, respectively.
The former generates a Gaussian element with independent FPC scores, 
while in the latter two cases, the scores $\{\xi_j\}$ are dependent, which could lead to a difficult inference case \cite[cf.][]{YDN23RB}.
The eigenvalues are defined through their gaps $\g_j-\g_{j+1} = 2j^{-a}$ with $\g_1 \equiv 2\sum_{j=1}^\infty j^{-a}$ and determined by the decay rate $a > 2$.
For eigenfunctions, we consider subsets $\{\phi_{q,j}\}_{j=1}^{J_{\mathrm{true}}}$ of four orthonormal systems for $q \in \{\rm tri, mono, cheb, spl\}$ in $\HH = L^2([0,1])$: trigonometric functions, and orthonormalized monomials, (shifted) Chebyshev polynomials, and cubic splines functions.
The details of the constructions of these basis functions are deferred to \autoref{AppOrtho} of the supplement.
All functions are evaluated 50 equi-distant grid in [0,1].

For depth statistics implementations based on IFD and ITD, we use well-implemented functions in the \texttt{ddalpha} package in R \citep{ddalpha}. 
The regularization $\ld$ for the RHD \eqref{eqRHDmod} and bandwidth for the KD in \eqref{eqKDsample} are determined 
by quantile levels $u \in \{0.5, 0.4, 0.3, 0.2, 0.1, 0.01, 0.001\}$ 
as suggested by \cite{YDL25RHD} and \cite{WN25}, respectively.
We found that, in both sizes and power,
$u=0.1$ and $u=0.001$ work best for the RHD statistic in two-sample and FoFR problems, respectively,
while the KD statistic performs well with $u=0.01$ in both problems.
We only report the results from these quantile levels.
In addition, to approximate the infimum in the RHD, we employ a random projection approach as in the original paper.
Since the RHD may produce ties, we devise a new tie breaking algorithm and apply it to our numerical studies. 
The RHD implementations is further detailed in \autoref{AppRHDimple} of the supplement.

For the methods based on FPCs, where truncation may be involved, we consider truncation levels in $\{1, \dots, 20\}$.
The truncation levels are chosen using fraction of variance explained (FVE), which is one of the most common approaches to select truncation level, with threshold 0.85 \cite[e.g,][Section~12.2]{KR17}.

All tests are conducted at significance level 0.05.
We conducted 1,000 Monte Calro iterations to compute empirical rejection rates, which provide empirical sizes under a null hypothesis and empirical power under an alternative hypothesis. 
Bootstrap methods are implemented with 1,000 bootstrap resamples.
Since the results are similar across scenarios for each problem, we only partially report them.

\subsection{Two-sample functional mean tests} \label{ssec_5_1}

\subsubsection{Simulation design}

Two independent groups $\{X_{ki}\}_{i=1}^{n_k}$, $k=1,2$, of iid functional data are generated based on \eqref{eqXgenKL}:
\begin{align*}
	X_{ki} \equiv \mu_k + \sum_{j=1}^{J_{\mathrm{true}}} \g_{k,j}^{1/2} \xi_{ki,j} \phi_{k,j}, \quad i=1,\dots,n_k, \quad k=1,2.
\end{align*}
The total sample sizes are given by $n \equiv n_1+n_2 \in \{50, 200, 1000\}$, where the sample sizes of each group are equal as $n_1=n_2=n/2$.
The FPC scores are distributed as $\xi_{ki,j} \ed \xi_j$, where $\xi_j$ is from \eqref{eqXgenKL}.
We vary eigenvalues and eigenfunctions to consider different scenarios with equal and unequal covariance operators. 
Namely, when eigenvalues are equal, $\g_{k,j} - \g_{k,j+1} \equiv 2j^{-a_k}$ with $a_1=a_2 = 2.5$;
if eigenvalues differ across groups, $a_1=5$ and $a_2=2.5$.
For scenarios with equal eigenfunctions, $\phi_{k,j} = \phi_{\mathrm{tri},j}$ for both $k=1,2$,
while when eigenfunctions are unequal, we set $\phi_{1,j} = \phi_{\mathrm{mono},j}$ and $\phi_{2,j} = \phi_{\mathrm{cheb},j}$. 
Similar constructions can be found in \cite{YK26},
where the details are given in \autoref{AppOrtho} of the supplement.

While the mean function $\mu_1$ of the first group is set to be constant as $\mu_1 = 0 \in \HH = L^2([0,1])$,
we consider seven functions with different shapes for the mean function of the second group under the strongest alternatives,
inspired by outlier constructions considered in \citet[Section~4.1]{YDL25RHD}. 
These respectively represent magnitude, jump, peak, linear, quadratic, cubic, and wiggle differences between mean functions.
These are denoted by $\mu_{2,d}$ with index $d \in D_{\mathrm{alter}} \equiv \{\rm Mag, Jump, Peak, Lin, Quad, Cub, Wig\}$.
We depict these seven curves in \autoref{fig_2test_alter},
while the detailed formulae are given in \autoref{AppAlter} of the supplement.
It is worthwhile mentioning that such features cannot be considered in finite-dimensional cases. 
The true mean $\mu_2$ of the second group is then defined by $\mu_2 = (1-c)\mu_1+ c\mu_2 = c\mu_{2,d}$ with $c \in \{0,0.2, 0.4, 0.6, 0.8, 1\}$.
Here, $c=0$ renders the null hypothesis $H_0:\mu_1 = \mu_2$,
while the mean difference becomes larger as $c$ increases.

\begin{figure}[h!]
	\centering
	\includegraphics[width=0.7\linewidth]{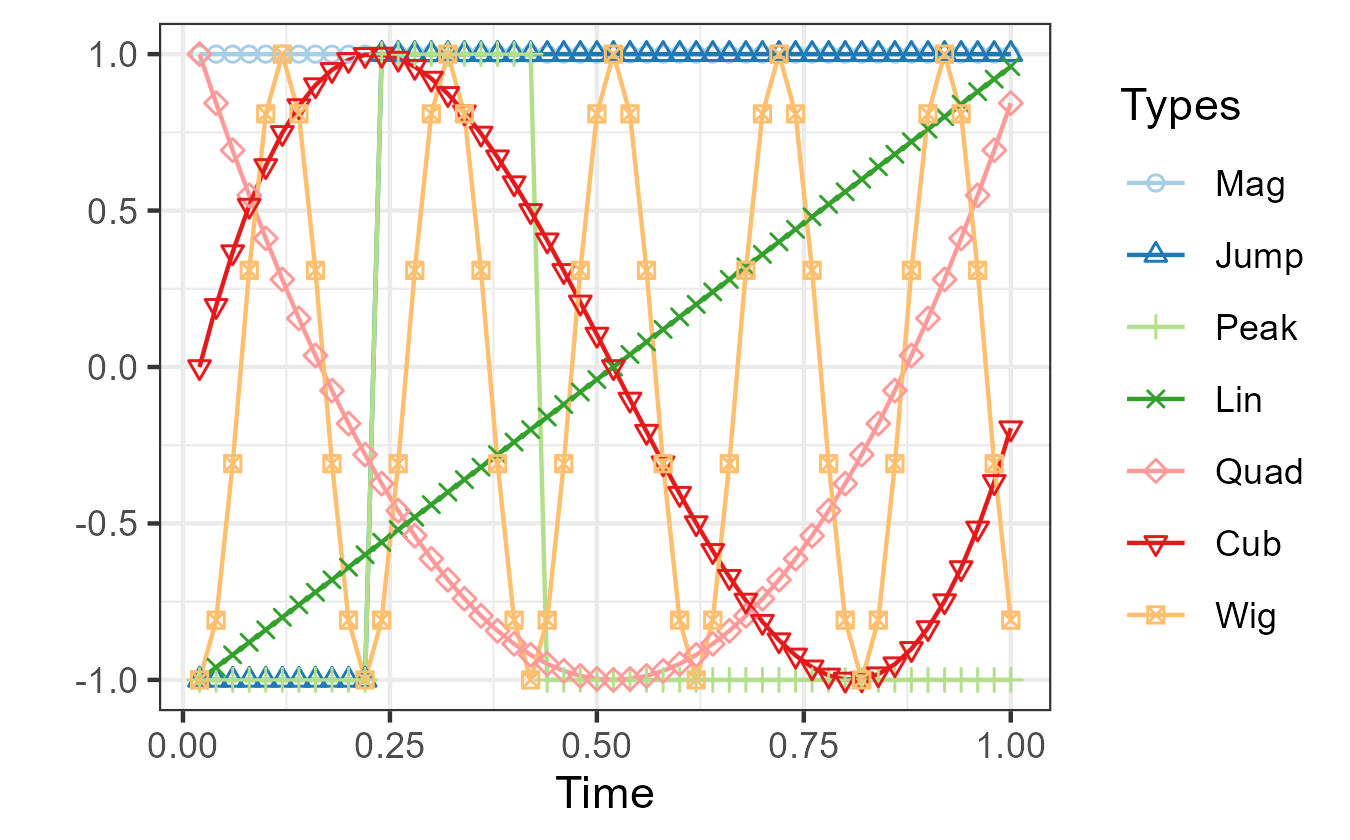}
	\caption{
		The curves with different shapes considered under strongest alternatives. 
	}
	\label{fig_2test_alter}
\end{figure}

\subsubsection{Competing tests}

In addition to the depth statistics based on ITD, IFD, RHD, and KD,
we consider several different state-of-the-art competing methods for two-sample functional mean tests including functional ANOVA approaches.
The first group of competing methods are: 
the quadratic forms by \cite{HKR13,SH18} (QuadInv and QuadMul, respectively); 
the $L^2$ statistic  by \cite{BHK09, HKR13, PS16} (L2); 
the supremum statistic by \cite{DKA20} (sup);
the global (integrated) pointwise F statistic by \cite{ZL14} (Fint);
and the supremum pointwise F statistic by \cite{ZCWZ19} (Fmax).
\autoref{AppFVE2test} of the supplement includes more details about the choice of truncation level for QuadInv and QuadMul.
We calibrate the critical values for these test statistics using the same bootstrap method, the one described in \autoref{ssec_2test}.
These are deserved to be called our main competitors as they are calibrated using the same bootstrap method and it will better show the effectiveness of the proposed depth statistic. 
Nevertheless, we also conduct the bootstrap for Fmax given in \cite{ZCWZ19} (Fmaxb) for comparison,
which is implemented by \texttt{fdANOVA::fanova.tests} in R with option \texttt{test="Fmaxb"}. 
It is important to note that the tests including Fmaxb and those described below may differ not only in their choice of test statistic but also in the resampling schemes used to approximate the sampling distributions.
Consequently, although we include some comparison with these other tests, our main goal is a comparison between test statistics, not overall tests.

Beyond the aforementioned tests based on the bootstrap method in \autoref{ssec_2test}, other competitors include permutation tests 
based on random projections by \cite{CF10} (TRP), 
basis function representation by \cite{GS15} (FP),
a functional version of the Hotelling's $T^2$ statistic by \cite{PSV18} (HotSq),
and graphical approach by \cite{MMJH20} (GET).
These are implemented in \texttt{fdANOVA} \citep{GS29fdANOVA} (with options \texttt{test="TRP"} and \texttt{test="FP"}), \texttt{fdahotelling},\footnote{https://github.com/astamm/fdahotelling} and \texttt{GET} \citep{MM24} packages, respectively.

In addition, we compare our depth statistics with confidence bands while treating the differences $\{Z_i \equiv X_{1i}-X_{2i}\}_{i=1}^{n/2}$ as one-sample. 
the naive pointwise t-test without adjusting for multiple testing (naive.t),
and bands for mean functions proposed by \cite{Degras11} (Bs), \cite{CR18} (BEc),
and \cite{LR23} (FFSCB.t),
which are implemented in \texttt{ffscb}\footnote{https://github.com/lidom/ffscb} package.
In particular, regarding the fairness parameter for FFSCB.t, we use 1 as implemented in the package.

Finally, as a most recent method, we include high-dimensional bootstrap approach that is applicable to functional ANOVA problem, developed by \cite{LLM23} and implemented in the package
\texttt{hdanova}.\footnote{https://github.com/linulysses/hdanova}
Instead of the trigonometric functions $\{\phi_{\mathrm{tri},j}\}$ onto which functional observations are projected in their method, 
we modify their function by considering the orthonormalized cubic spline functions $\{\phi_{\mathrm{spl},j}\}$ for the basis choice. 
The tuning parameter for their partial standardization, we consider both zero and the default implementation.
These are referred to as HANOVA0 and HDANOVA, respectively.

\subsubsection{Empirical sizes and power}

The empirical sizes of these 22 tests are given in \autoref{tb_2test_sizes}, when $n=50$ and the NN-type FPC scores are considered.
Overall, including the depth statistics, many tests are able to keep the nominal level regardless of covariance differences. 
QuadInv, QuadMul, Fint, Fmax, TRP, Bs, naive.t provide conservative results.
HotSq and FFSCB.t tend to fail to achieve the nominal level;
as expected, HotSq is only applicable for equal covariance scenarios.
FFSCB.t involves the fairness parameter, 
where the resulting power along with sizes becomes smaller as this parameter increases,
which is observed by \cite{LR23}.

\begin{table}[h!]
	\renewcommand{\arraystretch}{1.3}
	\centering
	\caption{
		Empirical sizes of two-sample functional mean tests,
		when $n=50$ and the FPC scorse are of NN-type.
		Sizes within [0.3, 0.7] are indicated in \textit{italic}.
	}
	\label{tb_2test_sizes}
	\medskip
	\begin{tabular}{c|cc|cc} \hline
		Eigenfunctions& \multicolumn{2}{c|}{Equal} & \multicolumn{2}{c}{Unequal} \\ \hline
		Eigenvalues & Equal & Unequal & Equal & Unequal \\ \hline
		ITD      & \textit{0.044}  & \textit{0.052}    & \textit{0.038}  & \textit{0.056}    \\
		IFD      & \textit{0.056}  & \textit{0.055}    & \textit{0.064}  & 0.082    \\
		RHD      & \textit{0.060}  & \textit{0.054}    & 0.074  & \textit{0.036}    \\
		%		RPD      & \textit{0.055}  & \textit{0.049}    & \textit{0.048}  & \textit{0.039}    \\
		KD       & \textit{0.069}  & \textit{0.067}    & \textit{0.064}  & \textit{0.056}    \\ \hline
		QuadInv  & 0.023  & 0.017    & 0.025  & 0.020    \\
		QuadMul  & 0.023  & 0.026    & 0.013  & \textit{0.033}    \\
		L2 & \textit{0.043}  & \textit{0.049}    & \textit{0.033}  & \textit{0.056}    \\
		sup  & \textit{0.035}  & \textit{0.038}    & \textit{0.036}  & \textit{0.056}    \\
		Fint     & 0.024  & 0.022    & 0.017  & 0.024    \\
		Fmax     & 0.015  & 0.016    & 0.011  & 0.011    \\ \hline
		Fmaxb    & \textit{0.037}  & \textit{0.045}    & \textit{0.049}  & \textit{0.065}    \\ \hline
		TRP      & 0.017  & 0.024    & 0.025  & 0.029    \\
		FP       & \textit{0.049}  & \textit{0.052}    & \textit{0.039}  & \textit{0.062}    \\
		HotSq    & \textit{0.050}  & 0.651    & 0.127  & 0.715    \\
		GET      & \textit{0.043}  & \textit{0.056}    & \textit{0.052}  & \textit{0.056}    \\ \hline
		FFSCB.t  & 0.078  & 0.086    & 0.091  & 0.097    \\
		Bs       & 0.002  & 0.004    & 0.002  & 0.010    \\
		BEc      & 0.414  & 0.295    & 0.414  & 0.404    \\
		naive.t  & 0.021  & 0.020    & 0.028  & 0.029    \\ \hline
		HDANOVA0     & \textit{0.042}  & \textit{0.039}    & \textit{0.030}  & 0.028    \\
		HDANOVA & \textit{0.053}  & \textit{0.043}    & \textit{0.047}  & \textit{0.051}    \\ \hline
	\end{tabular}
\end{table}

We now turn to our attention to power analysis, which is more important in our simulation studies.
The empirical rejection rates are provided in \autoref{fig_2test_power} under the same scenario.
Since the power curves overlap, which prevents visibility, we report the proposed depth statistics (red), our main competitors (QuadInv, L2, sup, Fint, Fmax, marked in blue) and some selective competitors that show higher powers (TRP, GET, HDANOVA, marked in green);
we exclude QuadMul since it shows almost zero power.
%The latter two groups of statistics are referred to as blue and green statistics, respectively.

The numerical performance of our proposed depth statistic is remarkable, especially when it is based on the KD \eqref{eqKDsample}. 
It shows highest or comparable power over all scenarios.
In particular, when shape alternatives are considered, the KD statistic is substantially powerful compared to the others.
Among the depth statistics, following the KD, 
the RHD and IFD statistics stand in second place and generally perform comparably with the green statistics.
The RHD mostly outperforms the IFD, except for the wiggle alternative scenarios.
This could be anticipated as the RHD shows less sensitivity to functions of high frequency like $\mu_{2,\mathrm{Wig}}$.
The ITD essentially works well only when differences appear in magnitude (i.e., under magnitude, jump, and peak alternatives).
This can be interpreted analogously to what happened in functional outlier detection;
since the ITD is known to be weak in detecting shape outliers,
the corresponding depth statistic could exhibit worse performance in rejecting shape alternatives.
We refer to \cite{YDL25RHD} for discussion of how a functional depth benefits functional outlier detection. 

\begin{sidewaysfigure}[p]
	\centering
	\includegraphics[width=0.99\linewidth]{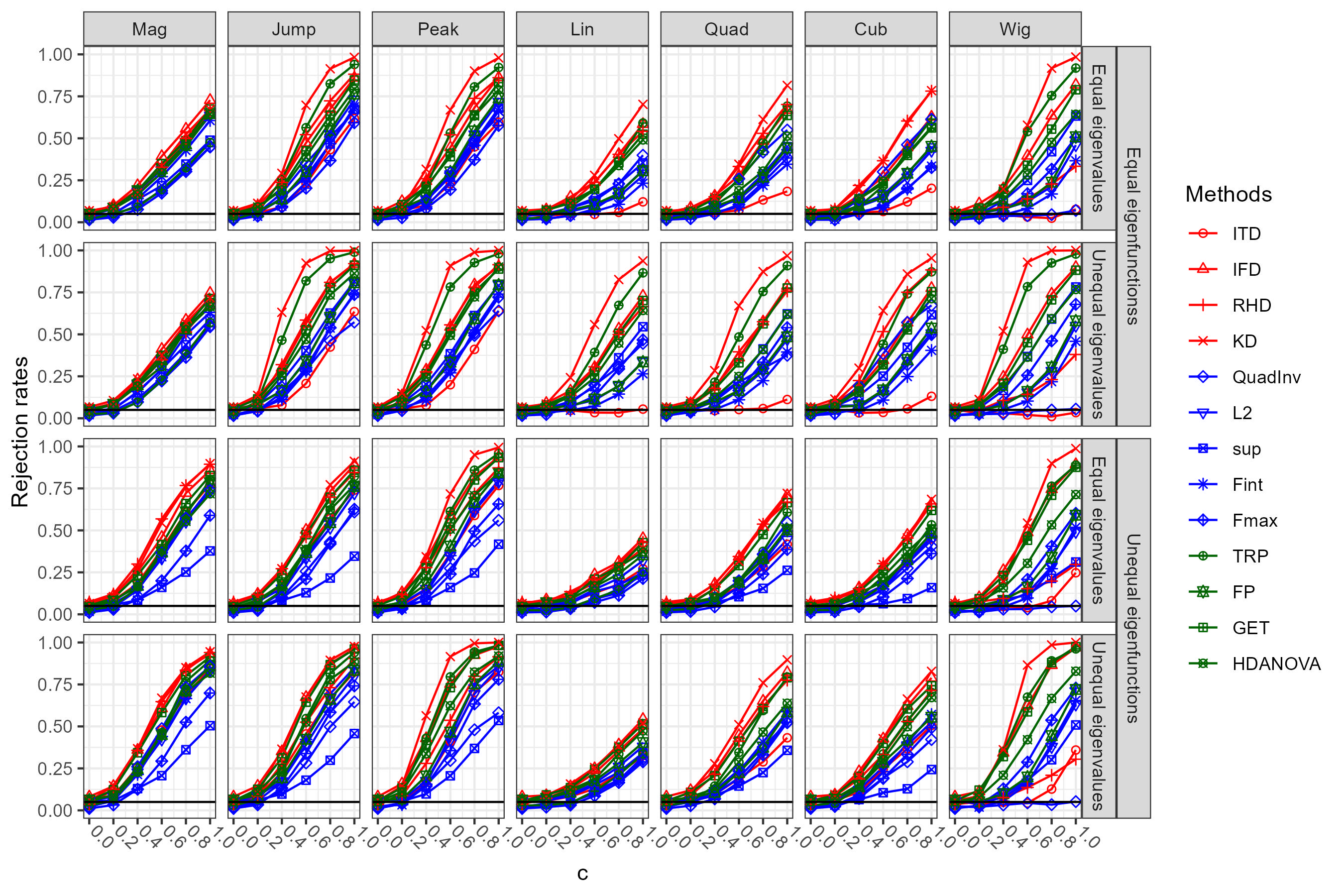}
	\caption{
		Empirical rejection rates of two-sample functional mean tests,
		when $n=50$ and the FPC scorse are of NN-type.
	}
	\label{fig_2test_power}
\end{sidewaysfigure}

The main competitors tend to behave similarly;
all of them perform worse than the KD, RHD, and IFD statistics 
under most scenarios.
The power of these blue statistics heavily relies on the equality of covariance operators and alternative scenarios.
For instance, in general, Fint performs worst when eigenfunctions are equal,
while the power of sup statistic is the lowest in the other cases with unequal eigenfunctions.
QuadInv works reasonably under magnitude, jump, and peak alternatives (even when covariance operators are different),
but its performance becomes worse as the form of $\mu_2$ is complicated;
for $\mu_{2,\mathrm{Wig}}$, it shows almost zero power.
The other competing tests are reasonably powerful in general,
although they perform worse than the KD statistic.
In most cases, these green statistics work better than the blue statistics

The overall winner of this power analysis for two-sample problem is the proposed depth statistic based on the KD \eqref{eqKDsample} with quantile level $u=0.01$.

\subsection{FoFR mean response inference} \label{ssec_5_2}

We continue our numerical analysis but under the FoFR setting in \autoref{ssec_fofr}.
Similarly to \eqref{eqXgenKL}, the regressors $\{X_i\}_{i=1}^n$ and $\{\e_i\}_{i=1}^n$ are generated based on the KL expansion as
$X_i = \sum_{j=1}^{J_{\mathrm{true}}} \g_j^{1/2} \xi_{ij} \phi_j$
and
$\e_i = \sum_{j=1}^{J_{\mathrm{true}}} \dt_j^{1/2} \zeta_{ij} \psi_j$,
respectively.
The eigenvalues are determined through their gaps $\g_j-\g_{j+1} = 2j^{-a_X}$ and $\dt_j-\dt_{j+1} = 2j^{-a_\e}$ with $\g_1 = 2\sum_{j=1}^\infty j^{-a_X}$ and $\dt_1 = 2\sum_{j=1}^\infty j^{-a_\e}$ for some $a_X \in \{2.5, 3.5\}$ and $a_\e \in \{2.5, 3.5\}$,
while the eigenfunctions are given as $\phi_j = \phi_{\mathrm{mono},j}$ and $\psi_j = \phi_{\mathrm{cheb},j}$.
We set $\xi_{ij} \ed \xi_j$ and $\zeta_{ij} \ed \xi_j$, where only NN- and NE-type FPC scores are considered.
The responses $\{Y_i\}_{i=1}^n$ are then computed based on the FoFR model \eqref{eq_fofr_model} with zero intercept $\ap = 0$ and slope operator $\BB$ described below.
The sample sizes under consideration are $n \in \{50, 200, 1000\}$

To construct the slope operator $\BB$, 
we first consider independent binary random variables $\{W_{\mathrm{slope},j}\}$ such that $\pr(W_{\mathrm{slope},j}=1) = 1/2 = \pr(W_{\mathrm{slope},j}=-1)$.
With decay rate $b \in \{1.5, 2.5\}$ and $J_0 = 5$, 
the true slope operator $\BB = \BB_c$ is given by 
\begin{align*}
	\BB 
	\equiv (1-c) \sum_{j=J_0+1}^{J_0} (2j^{-b}) W_{\mathrm{slope},j} \phi_{\mathrm{mono},j}^{\otimes2}
	+ c \sum_{j=1}^{2J_0} (2j^{-b}) W_{\mathrm{slope},j} \phi_{\mathrm{tri},j}^{\otimes2}
\end{align*}
for $c \in \{0, 0.2, 0.4, 0.6, 0.8, 1\}$.
The new regressor is generated also based on a KL expansion as $X_0 \equiv \sum_{j=1}^{J_{\mathrm{true}}} \g_{0j}^{1/2} \xi_{0j} \phi_{\mathrm{mono},j}$ independently of $\{(Y_i,X_i)\}_{i=1}^n$,
where $\g_{0j} = \g_j\I(j \leq J_0)$ and $\xi_{0j} \ed \xi_j$.
We note that, when $c=0$, it holds that $BX_0 = 0 = B\eo[X_1]$, and hence it renders the null hypothesis $H_0: \mu(X_0) = \eo[Y]$. 

For the residual bootstrap in \eqref{ssec_fofr}, three truncation levels $J, J_{\mathrm{res}},J_{\mathrm{cen}}$ are involved.
Following the previous studies \citep{YDN23RB, Yeon26}, we choose $J_{\mathrm{res}}$ as the one minimizing prediction error estimated by the 5-fold cross-validation, and set $J_{\mathrm{cen}}=J_{\mathrm{res}}$.
We then select $J$ as the minimum of the truncation levels bigger than $J_{\mathrm{cen}}$ whose FVE exceed the threshold 0.85.
This procedure is described in \autoref{AppDetailFoFR} of the supplement.

For depth statistics, we only consider the RHD and the KD,
as the frameworks for them and the FoFR residual bootstrap are commonly based on a Hilbert space. 
Since there are not many competing methods in this problem,
we compare the RHD and KD statistics with L2 and supremum norms used by \cite{Yeon26} and \cite{DT24}, respectively.
Since all tests achieve the nominal level well, 
we only provide their power curves.

%\autoref{tb_fofr_sizes} shows the empirical sizes of the tests,
%when $a_X = a_\e =2.5$, $b=1.5$, and the FPC scorse are of NN-type.
%Here, all tests achieves

%\begin{table}[h!]
%	\renewcommand{\arraystretch}{1.3}
%	\centering
%	\caption{
%		Empirical sizes of FoFR mean response inference,
%		when $a_X = a_\e =2.5$, $b=1.5$, and the FPC scorse are of NN-type. 
%	}
%	\label{tb_fofr_sizes}
%	\medskip
%	\begin{tabular}{c|cccc} \hline
%		& RHD   & KD    & L2    & sup   \\ \hline
%		$n=50$   & 0.042 & 0.054 & 0.044 & 0.049 \\
%		$n=200$  & 0.052 & 0.050 & 0.049 & 0.049 \\
%		$n=1000$ & 0.044 & 0.051 & 0.055 & 0.051 \\ \hline
%	\end{tabular}
%\end{table}

\autoref{fig_fofr_power} shows empirical power of the tests under consideration,
when $n=50$, $a_X = a_\e =2.5$, $b=1.5$, and the FPC scorse are of NN-type.
As in the two-sample problem, 
the depth statistics perform best.
Notably, the RHD here performs slightly better than the KD.
Another observations is that these depth statistics along with L2 statistic significantly outperforms the supremum statistic, 
where the latter can provide a simultaneous confidence band for $\BB$. 
This may indicate that, like confidence bands for two-sample problem are not powerful, 
the ones for FoFR may not be effective for hypothesis testing.

\begin{figure}[h!]
	\centering
	\includegraphics[width=0.7\linewidth]{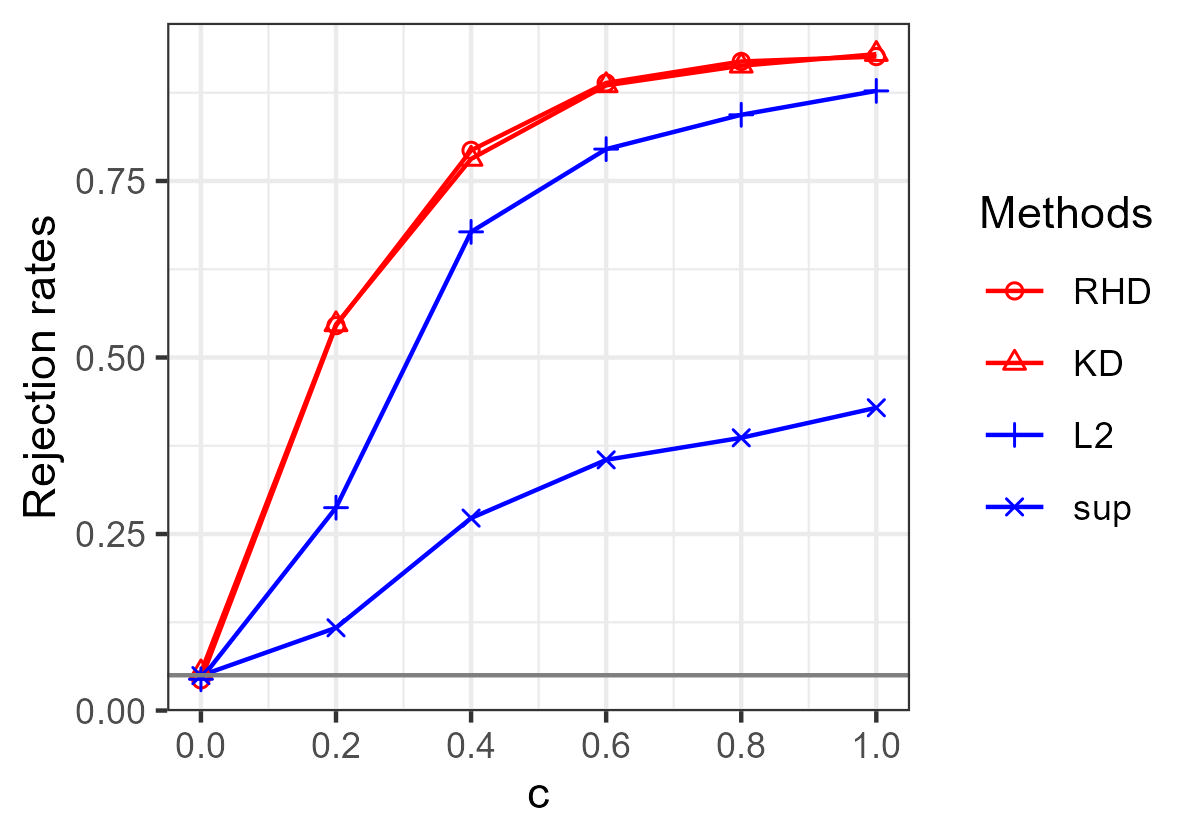}
	\caption{
		Empirical rejection rates for testing $H_0:\mu(X_0) = \eo[Y]$ in FoFR models,
		when $n=50$, $a_X = a_\e =2.5$, $b=1.5$, and the FPC scorse are of NN-type.
	}
	\label{fig_fofr_power}
\end{figure}

\subsection{Summary and guideline}

In this section, upon summarizing our discoveries, we provide a concise guideline for using depth statistics based on our comprehensive numerical studies.
\begin{enumerate}[(i)]
	\item 
	The KD statistics with $u=0.01$ works best among all tests and in all scenarios for both two-sample problem and FoFR. 
	\item 
	The RHD statistic also works well in general and particularly for FoFR, but it may be weak in detecting extremely wiggle alternatives. 
	The IFD statistic work well in general as well.
	Both RHD and IFD are comparable with other resampling-based methods (TRP, FP, HotSq, GET, HDANOVA0, HDANOVA).
	\item 
	The classical summary statistics (QuadInv, QuadMul, L2, sup, Fint, Fmax) are less effective to detect complex forms of alternatives.
	\item 
	Using confidence bands (e.g., FFSCB.t, BS, BEc, naive.t) may not be effective for hypothesis testing.
\end{enumerate}
Based on these summarized observations,
we recommend the depth statistic based on the KD with quantile level $u=0.01$.
%This effectiveness of the KD in our proposed depth statistic may be interpreted by using the recent theoretical findings by \cite{WN25}: 
%the KD fully characterizes the distribution on a separable Hilbert space;
%and the sample KD uniformly converges to its population counterpart over the entire Hilbert space. 

\section{Data examples} \label{sec6}

\subsection{Two-sample tests with Canadian weather data}

We analyze the Canadian weather datasets.
These are available from the Environment  and Climate Change Canada website,
and extracted using \texttt{weathercan} R package \citep{weathercan}. 
We collect daily mean temperatures for years from 1962 to 2011, from  stations 889, 2205, 5097, 5415, 6358 located in cities, Vancouver, Calgary, Toronto, Montreal, Halifax, respectively.
Motivated by the abrupt increase in global average temperature around 1981 \cite[cf.][]{NOAA},
for each city, we form two groups as the temperature curves of the first twenty years (1962-1981) and the later thirty years (1982--2011), yielding $\{X_{1i}\}_{i=1}^{n_1}$ and $\{X_{2i}\}_{i=1}^{n_2}$ with $n_1=20$ and $n_2=30$, respectively. 
We are interested in whether the mean temperature curves differ between these two groups. 

\begin{figure}[!h]
	\centering
	\includegraphics[width=0.99\linewidth]{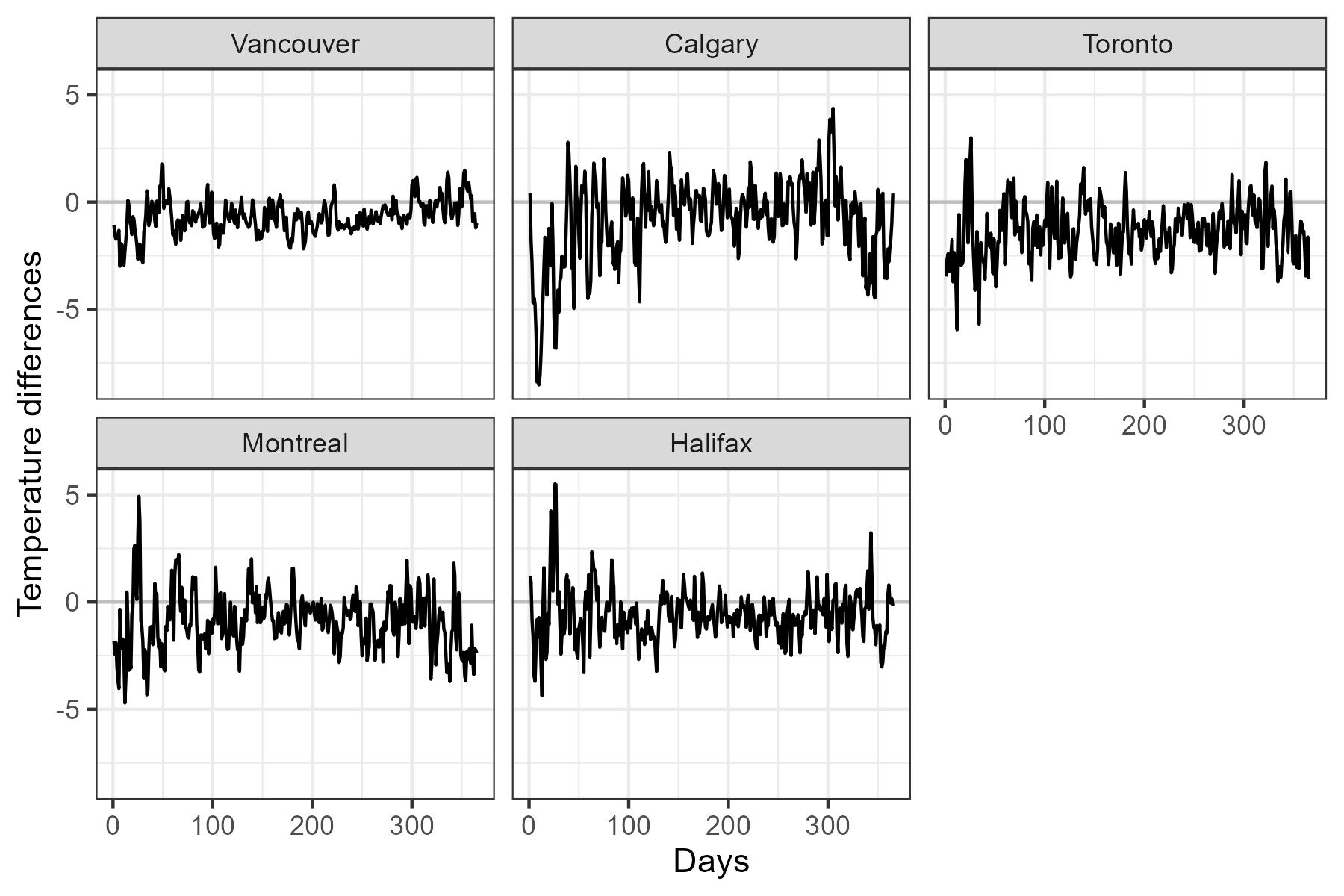}
	\caption{
		The differences between the temperature curves averaged over two groups of years 1962--1981 and 1982--2011. 
		The horizontal line at zero is colored in gray.
	}
	\label{fig_rda_2test_CW}
\end{figure}

The differences $\bar{X}_1-\bar{X}_2$ in mean curves for each city are represented in \autoref{fig_rda_2test_CW}. 
These curves show clear different from zero particularly in shape. 
We conduct the bootstrap two-sample functional mean tests with statistics as considered in \autoref{ssec_5_1}.
In addition to the depth statistics, only our main competitors QuadInv, QuadMul, L2, sup, Fint, and Fmax are considered in this data analysis, 
as the sampling distributions of these statistics are approximated by the same resampling approach. 
We select the tuning parameter as implemented in \autoref{ssec_5_1}, when needed.
The KD statistic produces a consistent result, which all explain  significant differences in mean,
while the other depth statistics sometimes provide weaker evidence depending on cities. 
Among the main competing tests, the $L^2$ statistic the most consistently rejects the null hypothesis,
while the other tests sometimes exhibit very high p-values.

\begin{table}[h!]\small
	\renewcommand{\arraystretch}{1.3}
	\centering
	\caption{
		The p-values from the bootstrap two-sample functional mean tests. 
		We provide the results from the proposed depth statistics and our main competitors considered in \autoref{ssec_5_1}. 
		Significant p-values no bigger than the level 0.05 are indicated in \textit{italic}.
	}
	\label{tb_rda_2test_pval}
	\medskip
	\begin{tabular}{c|cccccccccc} \hline
		& ITD   & IFD   & RHD   & KD & QuadInv & QuadMul & L2 & sup & Fint  & Fmax  \\ \hline
		Vancouver  & \textit{0.003} & \textit{0.000}     & \textit{0.000}     & \textit{0.004}  & \textit{0.028}   & 0.143   & \textit{0.007}    & 0.238   & \textit{0.002} & \textit{0.047} \\
		Calgary  & 0.216 & \textit{0.000}     & \textit{0.033} & \textit{0.050}   & 0.269   & 0.304   & 0.063    & 0.076   & 0.244 & 0.305 \\
		Toronto  & \textit{0.000}     & \textit{0.000}     & \textit{0.000}     & \textit{0.001}  & \textit{0.004}   & 0.080    & \textit{0.001}    & \textit{0.040}    & \textit{0.001} & 0.185 \\
		Montreal & \textit{0.013} & 0.273 & 0.778 & \textit{0.028}  & 0.168   & 0.583   & \textit{0.039}    & 0.395   & 0.084 & 0.512 \\
		Halifax & \textit{0.034} & \textit{0.000}     & 0.937 & \textit{0.048}  & 0.558   & 0.913   & 0.060     & 0.059   & 0.075 & 0.291 \\ \hline
	\end{tabular}
\end{table}

\subsection{Regression analysis of electricity prices on wind power in-feed}

We illustrate the proposed depth statistic in FoFR framework in this section.
We are interested in the relationship between electricity prices $Y_i$ and the wind power in-feed (kWh) $X_i$ in Germany.
Both are treated as functions of time within a day, and the time was scaled into the unit interval [0,1] before the analysis.
This dataset was analyzed by \cite{Liebl13}, where the original dataset can be found.
We use the version with sample size $n=591$, 
preprocessed by \cite{CMR25} for prediction in FoFR models.
The observed regressor and response curves are presented with gray color in \autoref{fig_fofr_rda_EW} with 20 randomly selected curves in black.

\begin{figure}[h!]
	\centering
	\includegraphics[width=0.99\linewidth]{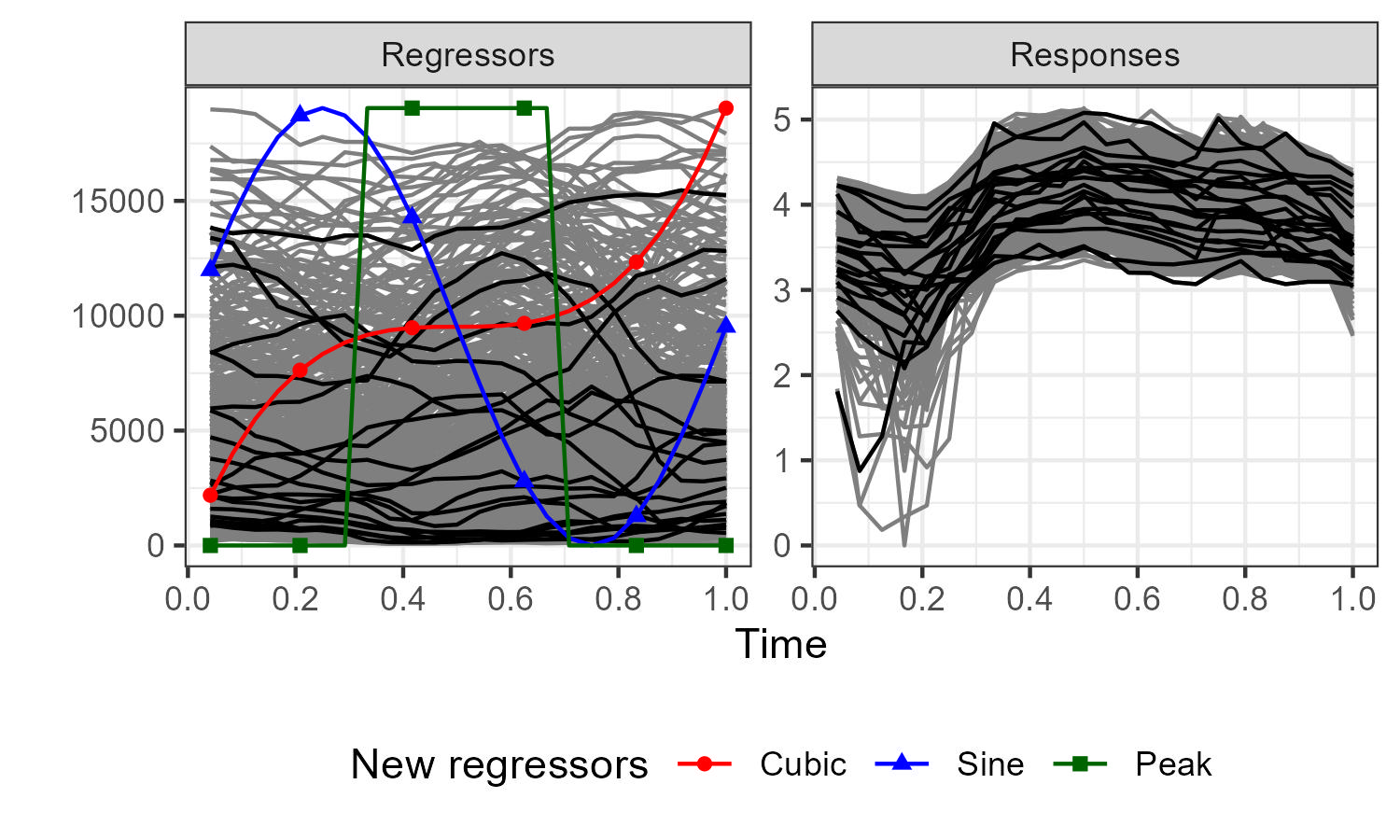}
	\caption{
		Regressor (wind power in-feed, kWh) and response (electricity prices) curves in gray.
		Twenty randomly selected curves are marked in black.
		Three artificial new regressors $\{X_{0,l}: l \in \{\rm cub, sin, peak\}\}$ are indicated with red circles (cubic), blue triangles (sine), and green squares (peak). 
	}
	\label{fig_fofr_rda_EW}
\end{figure}

For illustration of the proposed method, we consider artificial new regressors resembling cubic, sine, and peak curves, scaled by $X_{\max} \equiv \max_{1 \leq i \leq n} \max_{0 \leq t \leq 1} X_i(t)$:
\begin{align*}
	X_{0,\mathrm{cub}}(t) & = X_{\max}\{4(t-1/2)^3+1/2\}, \\
	X_{0,\sin}(t) & = X_{\max}\{\sin(2\pi t)/2+1/2\}, \\
	X_{0,\mathrm{peak}}(t) & = X_{\max}\I(0.3 \leq t \leq 0.7).
\end{align*}
The corresponding unobserved response curves are then expected to differ from the global mean response curve $\bar{Y} \equiv n^{-1} \sum_{i=1}^n Y_i$, due to their distinct patterns,
thereby increasing the complexity of data analysis. 
These new regressor curves $\{X_{0,l}: l \in \{\rm cub, sin, peak\}\}$ are shown in different colors and points in the left panel of \autoref{fig_fofr_rda_EW}.

\autoref{tb_fofr_rda_EW} shows the p-values from the bootstrap tests with the statistics considered in \autoref{ssec_5_2}.
The relevant tuning parameters are chosen as in \autoref{ssec_5_2}.
The depth statistics with RHD and KD consistently yield significant results,
whereas the $L^2$ and supremum norm statistics provide weaker evidence.
In particular, each of the latter two exhibits complementary strengths:
the $L^2$ statistic is effective in detecting deviations of the new response at cubic and sine curves from the global mean response 
but less sensitive for the peak curve,
while the supremum statistic shows the opposite pattern.
In contrast, the proposed depth statistics produce robust and consistent results across all new regressors.

\begin{table}[h!]
	\renewcommand{\arraystretch}{1.3}
	\centering
	\caption{
		The p-values of the bootstrap tests with the statistics consider in \autoref{ssec_5_2} for FoFR analysis of electricity prices on wind power in-feed. 
		Significant p-values no bigger than the level 0.05 are indicated in \textit{italic}.
		}
	\label{tb_fofr_rda_EW}
	\medskip
	\begin{tabular}{c|cccc} \hline
		& RHD   & KD    & L2    & sup   \\  \hline
		Cubic  & \textit{0.031} & \textit{0.037} & 0.089 & 0.682 \\
		Sine  & \textit{0.000}     & \textit{0.000}  & 0.053 & 0.159 \\
		Peak & \textit{0.000} & \textit{0.000}  & 0.140  & \textit{0.011} \\ \hline
	\end{tabular}
\end{table}

%\section*{Disclosure statement}
%
%The authors report there are not competing interests to declare.

\section*{Data availability statement}

The weather and wind power datasets analyzed in this paper can be obtained through the R package \texttt{weathercan} \citep{weathercan} and the work by \cite{CMR25}, respectively.

%\newpage
\bibliographystyle{plainnat}
\bibliography{DepthInfer}

\newpage

\setcounter{page}{1}

\normalsize
\renewcommand{\baselinestretch}{1.0}

\appendix

\centerline{\Large \bf Supplementary Material}

\renewcommand\theequation{\thesection.\arabic{equation}}
\newcounter{daggerfootnote}
\newcommand*{\daggerfootnote}[1]{%
	\setcounter{daggerfootnote}{\value{footnote}}%
	\renewcommand*{\thefootnote}{\fnsymbol{footnote}}%
	\footnote[2]{#1}%
	\setcounter{footnote}{\value{daggerfootnote}}%
	\renewcommand*{\thefootnote}{\arabic{footnote}}%
}
\begin{center}
	\large Hyemin Yeon\daggerfootnote{hyeon1@kent.edu}
	
	\medskip
	Department of Mathematical Sciences, 
	Kent State University
%	Kent, OH, 44242, USA.
	
\end{center}

\bigskip\noindent
This is a supplementary document of the manuscript titled ``Effective and flexible depth-based inference for functional parameters''.
This supplement includes technical and numerical details in Sections~\ref{AppTheo}--\ref{AppNumDetails}, respectively.

\section{Theoretical details} \label{AppTheo}

\subsection{Proof of the main results}\label{AppProofMain}

\begin{proof}[Proof of \autoref{thmDepthStat}]
	Write $A_n \equiv D(T_n;P_n) - D(T_n;P)$ so that
	\begin{align*}
		|A_n|
		\leq \sup_{x \in \HH} |D(x;P_n) - D(x;P)| \to 0. 
	\end{align*}
	Note that this convergence holds due to Condition~\ref{condConvDistnData} and~\ref{condContProb}.
	Then, by Condition~\ref{condConvDistnData} and~\ref{condContPoint},
	we further observe
	\begin{align*}
		D(T_n;P_n) = D(T_n;P) + A_n \xrightarrow{\mathsf{d}} D(T;P).
	\end{align*}
	By the Polya's theorem \citep[Theorem~9.1.4]{AL06} and Condition~\ref{condLimitCont},
	we have
	\begin{align*}
		\sup_{u \in \R}|\pr(D(T_n;P_n) \leq u) - \pr(D(T;P) \leq u)| \to 0.
	\end{align*}
	Similarly, 
	we consider $A_n^* \equiv D(T_n^*;\widehat{P}_n) - D(T_n^*;P)$ so that
	\begin{align*}
		|A_n^*|
		\leq \sup_{x \in \HH} |D(x;\widehat{P}_n) - D(x;P)| \xrightarrow{\pr} 0,
	\end{align*}
	due to Conditions~\ref{condConvDistnData}, \ref{condConvDistnBTS}, and~\ref{condContProb}.
	This is derived by using a subsequence argument \cite[e.g.,][Theorem~20.5]{bill95}.
	Then, using Conditions~\ref{condContPoint} and~\ref{condLimitCont}, and the Polya's theorem \citep[Theorem~9.1.4]{AL06},
	we have (possibly following a subsequence) that
	\begin{align*}
		\sup_{u \in \R}|\pr^*(D(T_n^*;\widehat{P}_n) \leq u) - \pr(D(T;P) \leq u)| \xrightarrow{\pr} 0.
	\end{align*}
	This concludes the desired result.
\end{proof}

\begin{proof}[Proof of \autoref{thmDepthPval}]
	First, by \autoref{thmDepthStat}, we have 
	\begin{align*}
		|\widehat{F}_n(D(T_n; \widehat{P}_n)) - F(D(T_n;\widehat{P}_n))|
		\leq \sup_{u \in \R} |\widehat{F}_n(u) - F(u)| \xrightarrow{\pr} 0
	\end{align*}
	Using Conditions~\ref{condConvDistnData}, \ref{condContPoint}, and~\ref{condContProb}, 
	we also observe that
	\begin{align*}
		|D(T_n;\widehat{P}_n)-D(T_n;P) |
		& \leq \sup_{x \in \HH} |D(x;P_n) - D(x;P)| \to 0, \\
		|D(T_n;P) - D(T;P)| 
		& \leq \sup_{P \in \pp(\HH)} |D(T_n;P)-D(T;P)|
		\xrightarrow{\pr} 0,
	\end{align*}
	implying $D(T_n;\widehat{P}_n) \xrightarrow{\pr} D(T;P)$. 
	Here, deriving the second result may require a subsequence argument \cite[e.g.,][Theorem~20.5]{bill95}.
	The probability integral transform along with Slutsky's theorem \cite[cf.][Theorem~3.1]{bill99} concludes
	\begin{align*}
		\widehat{\pi}_n
		&= \widehat{F}_n(D(T_n; \widehat{P}_n)) - F(D(T_n;\widehat{P}_n))
		+ F(D(T_n;\widehat{P}_n))
		\xrightarrow{\mathsf{d}} F(D(T;P)) \sim \mathsf{Unif}(0,1).
	\end{align*}
\end{proof}

\subsection{Proof for the two-sample functional mean tests} \label{App2test}

\begin{proof}[Proof of \autoref{thm_two}]
	By a central limit theorem in a general Hilbert space  \cite[e.g.,][Theorem~7.7.6]{HE15},
	we have
	\begin{align*}
		\sqrt{n_1n_2 \over n_1+n_2} (\bar{X}_1-\mu_1)
		& \xrightarrow{\mathsf{d}} \mathsf{G}(0,(1-\zeta)\ga_1), \\
		\sqrt{n_1n_2 \over n_1+n_2} (\bar{X}_2-\mu_2)
		& \xrightarrow{\mathsf{d}} \mathsf{G}(0,\zeta\ga_2), 
	\end{align*}
	implying  
	\begin{align*}
		T_{\mathrm{two},\bm{n}} 
		= \sqrt{n_1n_2 \over n_1+n_2} (\bar{X}_1 - \bar{X}_2)
		\xrightarrow{\mathsf{d}} \mathsf{G}(0,(1-\zeta)\ga_1 + \zeta\ga_2).
	\end{align*}
	Here, $ \mathsf{G}(0,(1-\zeta)\ga_1 + \zeta\ga_2)$ stands for the Gaussian distribution on $\HH$ with mean zero and covariance $(1-\zeta)\ga_1 + \zeta\ga_2$.
	This derives the first result.
	
	For the second part, let $E_k \equiv \left( \widehat{\ga}_k \to \ga_k \right)$ for $k=1,2$ and $E \equiv E_1 \cap E_2$ so that $\pr(E)=1$ by the law of large numbers \citep[Theorem~7.7.5]{HE15}.
	Then, on $E$, we have the following observations.
	Write $\bar{\e}_k^* \equiv n_k^{-1} \sum_{i=1}^{n_k} \e_{ki}^*$.  
	Using $\eo^*[\e_{ki}^*] = 0$, $\var^*[\e_{ki}^*] = \eo^*[(\e_{ki}^*)^{\otimes 2}] = \widehat{\ga}_k$, and the independence between $\{\e_{ki}^*\}_{i=1}^{n_k}$, 
	we can see that $\sqrt{n_k}\bar{\e}_k^*$ is tight with respect to $\pr^*$,
	because
	\begin{align*}
		\eo^*[\|\sqrt{n_k}\bar{\e}_k^*\|^2]
		\leq \eo^*[\|\e_k^*\|^2]
		= \mathrm{tr}(\var^*[\e_k^*])
		= \mathrm{tr} (\widehat{\ga}_k)
		\to \mathrm{tr} (\ga_k),
	\end{align*}
	implying
	\begin{align*}
		\limsup_{n\to\infty}\pr(\|\sqrt{n_k}\bar{\e}_k^*\| > M) \leq M^{-2} \mathrm{tr} (\ga_k) \to 0, \quad \text{as} \quad M\to\infty.
	\end{align*}
	Also, let $v \in \HH$ so that $\eo[\langle \e_{ki}^*, v \rangle] = 0$ and $\var^*[\langle \e_{ki}^*, v \rangle] = \langle \widehat{\ga}_k v, v \rangle \to \langle \ga_k v, v \rangle$.
	Then, following $\pr^*$, we have 
	\begin{align*}
		\langle \ga_k v, v \rangle^{-1/2} \langle \sqrt{n_k} \bar{\e}_k^*, v \rangle
		= \left( \langle \widehat{\ga}_k v, v \rangle \over \langle \ga_k v, v \rangle \right)^{1/2}  \sqrt{n_k} n_k^{-1} \sum_{i=1}^{n_k} \langle \widehat{\ga}_k v, v \rangle^{-1/2} \langle \e_{ki}^*, v \rangle
		\xrightarrow{\mathsf{d}} \nd(0,1),
	\end{align*}
	i.e., $\langle \sqrt{n_k} \bar{\e}_k^*, v \rangle \xrightarrow{\mathsf{d}} \nd(0,\langle \ga_k v, v \rangle)$.
	In brief, on $E$, \cite[Theorem~2.3]{Bosq00} implies that $\sqrt{n_k} \bar{\e}_k^*$ converges in distribution to $\mathsf{G}(0,\ga_k)$ for each $k=1,2$,
	implying that 
	\begin{align*}
		T_{\mathrm{two},\bm{n}}^* 
		= \sqrt{n_1n_2 \over n_1+n_2} (\bar{\e}_1^* - \bar{\e}_2^*)
		\xrightarrow{\mathsf{d}} \mathsf{G}(0,(1-\zeta)\ga_1 + \zeta\ga_2).
	\end{align*}
	Since $\pr(E)=1$, this concludes part~(b). 
\end{proof}

\subsection{Continuity of the kernel depth} \label{AppContKD}

Write $\pp(\HH)$ for the set of all probability measures on $\HH$. 
We start with a series of definitions related to the kernel theory.
We refer to \cite{WN25} for a recent theory for kernel-based methods in Statistics and Machine Learning.

\begin{defn}
	We call a function $\ka:\HH\times\HH\to\R$ a \emph{kernel} on $\HH$ if
	\begin{enumerate}[(a)]
		\item $\ka$ is symmetric as $\ka(x,y) = k(y,x)$ for all $x\in\HH$ and $y \in\ HH$, and
		
		\item $\ka$ is positive definite, i.e., $\sum_{j=1}^p \sum_{j'=1}^p \ka(x_j, x_{j'}) \geq 0$ for all $(x_1, \dots, x_p) \in \HH^p$, $(a_1, \dots, a_p) \in \R^p$, and $p \in \N$. 
		
	\end{enumerate}
\end{defn}

\begin{defn}
	For a kernel $\ka$, 
	a Hilbert space $\G(\ka)$ of functions from $\HH$ to $\R$ is called a \emph{reproducing kernel Hilbert space (RKHS)} if
	\begin{enumerate}[(a)]
		\item $\ka(\cdot, x) \in \G(\ka)$ for all $x \in \HH$, and
		
		\item (reproducing property) $\langle g, \ka(\cdot, x) \rangle_\ka = g(x)$ for all $g \in \G(\ka)$ and $x \in \HH$. 
		
	\end{enumerate}
\end{defn}

\begin{defn}
	For a kernel $\ka$, we define the \emph{kernel mean embedding (KME)} $\ee_\ka:\pp(\HH) \to \G(\ka)$ as $\ee_\ka(P) = \int \ka(\cdot, x) dP(x)$ for $P \in \pp(\HH)$, i.e.,
	\begin{align*}
		\{\ee_\ka(P)\}(x) = \int \ka(x, y) dP(y), \quad x \in \HH, \quad P \in \pp(\HH).
	\end{align*}
	%	\cite[Definition~4]{WN25}.
\end{defn}
We observe that, 
for $g \in \G(\ka)$ and $P \in \pp(\HH)$,
\begin{align*}
	\langle \ee_\ka P, g \rangle_\ka
	= \int \langle \ka(\cdot, x), g \rangle dP(x)
	= \int g(x) d P(x)
	= \int g dP ,
	%	= Pg
\end{align*} 
and hence, $\langle \ee_\ka P, \ka(\cdot, x) \rangle_\ka = \{\ee_\ka (P)\}(x)$ for all $x \in \HH$. 
%\cite[cf.][Equation~(11)]{WN25}.

With a slight abuse of notation, 
we define an inner product between $P$ and $Q$ with respect to $\ka$ as $\langle P, Q \rangle_\ka \equiv \langle \ee_\ka P, \ee_\ka Q \rangle_\ka$ for $P\in \pp(\HH)$ and $Q \in \pp(\HH)$. 
When $\ka$ is Gaussian, i.e., $\ka(x,y) = e^{-\|x-y\|^2/2}$, $x\in\HH$, $y\in\HH$, it indeed induces an inner product on $\pp(\HH)$ as $\ka$ is characteristic by \citet[Theorem~11]{WN25}.
Note that, for $P\in \pp(\HH)$ and $Q \in \pp(\HH)$,
\begin{align*}
	\langle P, Q \rangle_\ka = \langle \ee_\ka P, \ee_\ka Q \rangle_\ka
	= \int \int \ka(x,y) dP(x) dQ(y)
	= (P \otimes Q)(\ka),
\end{align*}
where $P \otimes Q$ represents the product measure of $P$ and $Q$ \cite[cf.][Section~2]{bill99}.
In particular, the corresponding distance $\|P-Q\|_\ka \equiv \langle P-Q, P-Q \rangle_\ka^{1/2}$ between $P$ and $Q$ is called the \emph{maximum mean discrepancy (MMD)} \cite[Definition~5]{WN25}. 

The following lemma states that the weak convergence of probability measures implies the convergence in MMD. 
\begin{lem} \label{lemMMDcharWeakConv}
	Let $\ka$ be a bounded and continuous kernel on $\HH$. 
	For a sequence $\{P_n\}$ in $\pp(\HH)$ and $P \in \pp(\HH)$,
	as $n\to\infty$, if $P_n$ weakly converges to $P$,
	then $\|P_n-P\|_k \to 0$. 
\end{lem}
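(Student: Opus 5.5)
The plan is to expand the squared MMD with the product-measure identity $\langle P,Q\rangle_\ka = (P\otimes Q)(\ka)$ recorded just before the statement, and then use the fact that weak convergence of marginals transfers to product measures. Concretely, bilinearity of the inner product on embeddings gives
\begin{align*}
	\|P_n-P\|_\ka^2
	= (P_n\otimes P_n)(\ka) - 2 (P_n\otimes P)(\ka) + (P\otimes P)(\ka).
\end{align*}
It therefore suffices to show $(P_n\otimes P_n)(\ka)\to(P\otimes P)(\ka)$ and $(P_n\otimes P)(\ka)\to(P\otimes P)(\ka)$. Once these hold, the right-hand side tends to $(P\otimes P)(\ka)-2(P\otimes P)(\ka)+(P\otimes P)(\ka)=0$.

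The key step is the weak convergence $P_n\otimes P_n \Rightarrow P\otimes P$ and $P_n\otimes P\Rightarrow P\otimes P$ on $\HH\times\HH$. I would invoke the standard result for separable spaces (e.g.\ \citet[Theorem~2.8]{bill99}): if $P_n\Rightarrow P$ and $Q_n\Rightarrow Q$ on separable metric spaces, then $P_n\otimes Q_n \Rightarrow P\otimes Q$. Separability of $\HH$ matters here, because it makes the product Borel $\sigma$-field coincide with the Borel $\sigma$-field of $\HH\times\HH$.

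If a self-contained argument is preferred, I would proceed through characteristic functionals instead. Weak convergence of $P_n$ gives tightness of $\{P_n\}$ by Prohorov's theorem, and a product of compact sets is compact, so $\{P_n\otimes P_n\}$ is tight. The product measures are then identified on the $\pi$-system of product sets $A\times B$ with $P$-continuity boundaries, since $P_n(A)P_n(B)\to P(A)P(B)$. Alternatively, one checks that their characteristic functionals factor and converge.

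The final step applies the portmanteau theorem. Since $\ka$ is bounded and continuous on $\HH\times\HH$, weak convergence yields $(P_n\otimes P_n)(\ka)=\int\!\!\int \ka\, dP_n\, dP_n \to \int\!\!\int \ka\, dP\, dP$, and likewise for the cross term. Boundedness of $\ka$ also guarantees that every integral in the expansion is finite, so the embeddings $\ee_\ka P_n$ lie in $\G(\ka)$ and the expansion is legitimate: the bound $\|\ee_\ka Q\|_\ka^2 = (Q\otimes Q)(\ka)\le \sup\ka$ holds for any $Q\in\pp(\HH)$.

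I expect the only genuine obstacle to be the product-measure weak convergence. Everything else is algebra followed by the defining property of weak convergence, and citing the Billingsley result settles that step.
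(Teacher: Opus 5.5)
Your proposal is correct and takes essentially the same route as the paper: it expands $\|P_n-P\|_\ka^2$ into the product-measure integrals $(P_n\otimes P_n)(\ka)$, $(P_n\otimes P)(\ka)$, $(P\otimes P_n)(\ka)$ and $(P\otimes P)(\ka)$, invokes Billingsley's Theorem~2.8 (using separability of $\HH$) to get weak convergence of each product measure to $P\otimes P$, and concludes from the boundedness and continuity of $\ka$. Combining the two cross terms into $-2(P_n\otimes P)(\ka)$ by symmetry is valid, and it gets the sign right where the paper's displayed expansion has a typo on the $(P\otimes P_n)(\ka)$ term.
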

\begin{proof}
	We follow the argument in the proof of \citet[Lemma~5(iv)$\Rightarrow$(i)]{SBSM23}.
	Due to the weak convergence of $\{P_n\}$ to $P$,
	\citet[Theorem~2.8(ii)]{bill99} yields that $P_n \otimes P$, $P \otimes P_n$, and $P_n \otimes P_n$ all weakly converge to $P \otimes P$.
	Since $\ka$ is also bounded and continuous,
	we have
	\begin{align*}
		\|P_n-P\|_\ka^2
		& = \langle P_n, P_n \rangle_\ka - \langle P_n, P \rangle_\ka - \langle P, P_n \rangle_\ka + \langle P, P \rangle_\ka
		\\& = (P_n \otimes P_n)(\ka) - (P_n \otimes P)(\ka) + (P \otimes P_n)(\ka) + (P \otimes P)(\ka)
		\to 0.
	\end{align*}
\end{proof}

\begin{proof}[Proof of \autoref{thmKDcont}]
	
	We define $\ka:\HH\times\HH\to\R$  by $\ka(x,y) \equiv K(\|x-y\|)$ for $x\in\HH$ and $y\in\HH$.
	%	 is a bounded and continuous kernel on $\HH$.
	Due to the equivalence between KME (induced by $\ka$) and the KD (determined by $K$) in \citet[Theorem~1]{WN25}, it suffice to show the convergence upon embedding the probability measures into $\G(\ka)$.
	
	The first part is derived as follows.
	Using the symmetry of the kernel $\ka$ and its reproducing property,
	we find that, for each $x \in \HH$ and $y \in \HH$, 
	\begin{align*}
		\|\ka(\cdot, x)-\ka(\cdot, y)\|_\ka^2
		= \ka(x,x) + \ka(y,y) - 2\ka(x,y).
	\end{align*}
	This implies that $\|\ka(\cdot, x_n)-\ka(\cdot, x)\|_\ka \to 0$ as $x_n\to x$ due to the continuity of $\ka$. 
	On the other hand,
	the boundedness of $\ka$ yields
	\begin{align*}
		\sup_{P \in \pp(\HH)} \|\ee_\ka P\|_\ka^2
		= \sup_{P \in \pp(\HH)} \int \int \ka(x,y) dP(x) dP(x)
		\leq \sup_{x \in \HH} \ka(x,x) < \infty.
	\end{align*}
	Combining these, we conclude that 
	\begin{align*}
		\sup_{P \in \pp(\HH)} |\ee_\ka P(x_n) - \ee_\ka P(x)|
		& = \sup_{P \in \pp(\HH)} |\langle \ee_\ka P, \ka(\cdot, x_n) \rangle_\ka - \langle \ee_\ka P, \ka(\cdot, x) \rangle_\ka|
		\\& \leq \|\ka(\cdot, x_n)-\ka(\cdot, x)\|_\ka \sup_{P \in \pp(\HH)} \|\ee_\ka P\|_\ka \to 0. 
	\end{align*}
	
	Next, we prove part~(b).
	Following the proof of \citet[Theorem~5]{WN25}, 
	since $\|\ka(\cdot, x)\|_\ka^2 = \ka(x,x)$ for $x \in \HH$, 
	it holds as
	\begin{align*}
		\sup_{x \in \HH} |\ee_\ka P_n(x) - \ee_\ka P(x)|
		& = \sup_{x \in \HH} |\langle \ee_\ka P_n, \ka(\cdot, x) \rangle_\ka - \langle \ee_\ka P, \ka(\cdot, x) \rangle_\ka|
		\\& \leq \sup_{x \in \HH} \|\ka(\cdot, x)\|_\ka \|\ee_\ka P_n-\ee_\ka P\|_\ka
		\\& = \sup_{x \in \HH} \sqrt{\ka(x,x)} \|P_n-P\|_\ka \to 0,
	\end{align*}
	where the last convergence follows by \autoref{lemMMDcharWeakConv}.
\end{proof}

%\begin{proof}
%	kernel depth is continuous in P?
%	
%	\cite[Theorem~11]{WN25}, Gaussian kernel is characteristic to $\pp(\HH)$,
%	
%	\cite[Theorem~8]{SS18}, it is characteristic to 
%	
%	$\mm$ \citep{SBSM23} the set of finite Radon (signed and regular) measures (= $\mm_f$ in \cite{SS18})
%	
%	$\mm_f^0 = \{ \mu \in \mm_f: \mu(\X) = 0 \}$
%	
%	
%	
%	
%	
%	
%	
%\end{proof}

\subsection{Nondegeneracy of the modified RHD} \label{AppRHDnondege}

We provide a nondegeneracy result for the modified RHD given in  \autoref{ssecFdepthRHD} under a mild condition, which includes all elliptically-contoured functional data \citep{BBT14}.

\begin{thm}
	Let $Q \in \pp(\HH)$ and we write 
	\begin{align*}
		\vv_\ld(Q) \equiv \{v \in \HH:\|v\|=1, \|\ga(Q)^{1/2}v\| \geq \ld\}.
	\end{align*}
	Suppose that 
	\begin{align} \label{eqRHDcondNonDege}
		\sup_{v \in \vv_\ld(Q)}\pr \left( {\langle X-\eo[X], v \rangle \over \|\ga(Q)^{1/2}v\|} \leq u \right) < 1.
	\end{align}
	Then, $D_{\mathrm{RHD},\ld}(x;Q)>0$ for all $x \in \HH$. 
\end{thm}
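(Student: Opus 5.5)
The plan is to bound each halfspace probability in the infimum \eqref{eqRHDmod} from below by a single positive constant that depends on $x$ but not on the direction $v$. I read the hypothesis \eqref{eqRHDcondNonDege} as holding for every $u \in \R$, and I assume $\ld>0$, which the definition of $\vv_\ld(Q)$ implicitly requires. If $\vv_\ld(Q)$ is empty, the infimum is $+\infty$ by convention and there is nothing to prove, so I take $\vv_\ld(Q) \neq \emptyset$.

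Fix $x \in \HH$ and write $\mu \equiv \eo[X]$ and $s_v \equiv \|\ga(Q)^{1/2}v\|$ for $v \in \vv_\ld(Q)$. Since $s_v \geq \ld > 0$, I can standardize:
\begin{align*}
	\pr(\langle X-x, v \rangle \geq 0)
	= \pr\left( {\langle X-\mu, v \rangle \over s_v} \geq {\langle x-\mu, v \rangle \over s_v} \right).
\end{align*}
The key uniform bound is Cauchy--Schwarz together with $\|v\|=1$ and $s_v \geq \ld$:
\begin{align*}
	{\langle x-\mu, v \rangle \over s_v} \leq {\|x-\mu\| \over \ld} \equiv u_x .
\end{align*}
The right-hand side $u_x$ is finite and does not depend on $v$. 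Therefore, for every $v \in \vv_\ld(Q)$,
\begin{align*}
	\pr(\langle X-x, v \rangle \geq 0)
	\geq \pr\left( {\langle X-\mu, v \rangle \over s_v} > u_x \right)
	= 1 - \pr\left( {\langle X-\mu, v \rangle \over s_v} \leq u_x \right).
\end{align*}

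Next I take the infimum over $v \in \vv_\ld(Q)$. This gives
\begin{align*}
	D_{\mathrm{RHD},\ld}(x;Q) \geq 1 - \sup_{v \in \vv_\ld(Q)} \pr\left( {\langle X-\mu, v \rangle \over s_v} \leq u_x \right).
\end{align*}
The right-hand side is strictly positive by \eqref{eqRHDcondNonDege} applied with $u = u_x$, which concludes the argument. To connect with the remark before the statement, I would add that for elliptically contoured $X$ the standardized projection has a common law for all $v$ with unbounded support, so the supremum equals a single value $G(u)<1$ and the hypothesis holds.

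The main obstacle is interpretive rather than technical. The hypothesis must be read as holding for all $u \in \R$, or at least for $u = \|x-\eo[X]\|/\ld$. It also needs $\eo[\|X\|^2] < \infty$ so that $\ga(Q)$ and $\eo[X]$ are well defined. The only analytic point is that the regularization $s_v \geq \ld$ is exactly what makes the threshold $u_x$ uniform in $v$. Without it, as for the FHD, directions with $s_v \to 0$ would push the threshold to infinity and the lower bound could collapse to zero.
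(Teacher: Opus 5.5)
Your proof is correct and follows the paper's argument: standardize $\langle X-x,v\rangle$ by $\|\ga(Q)^{1/2}v\|$, bound the depth below by $1$ minus a supremum of standardized distribution functions, and invoke \eqref{eqRHDcondNonDege}. Your version is more careful than the paper's one-line display in two ways. First, you replace its $v$-dependent threshold $\langle x-\eo[X],v\rangle/\|\ga(Q)^{1/2}v\|$ with the uniform bound $\|x-\eo[X]\|/\ld$, which is what lets a single $u$ in the hypothesis suffice. Second, you correctly write $\pr(\langle X-x,v\rangle\geq 0)\geq 1-\pr(\cdot\leq u_x)$, whereas the paper states this as an equality.
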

\begin{proof}
	This holds because
	\begin{align*}
		D_\ld(x;Q)
		& = \inf_{v \in \vv_\ld(Q)} \pr(\langle X, v \rangle \geq \langle x, v \rangle)
		\\& = 1 - \sup_{v \in \vv_\ld(Q)}\pr \left( {\langle X-\eo[X], v \rangle \over \|\ga(Q)^{1/2}v\|} \leq {\langle x-\mu, v \rangle \over \|\ga(Q)^{1/2}v\|} \right)
		>0.
	\end{align*}
\end{proof}

\subsection{Assumptions for FoFR residual bootstrap consistency} \label{AppFoFRcond}

We provide the technical conditions that are sufficient for the residual bootstrap consistency for function-on-function regression in \autoref{thm_fofr}; we refer to \cite{Yeon26} for more details about its theoretical development and an example that satisfies all these conditions (Example~1 therein).
We write $\ga \equiv \eo[(X-\eo[X])^{\otimes 2}]$ for the covariance operator of $X$ with its spectral decomposition $\ga = \sum_{j=1}^\infty \g_j (\phi_j \otimes \phi_j)$, 
where $(\g_j,\phi_j)$ denotes its $j$-th eigenvalue-eigenfunction pair.
\autoref{thm_fofr} holds under the following conditions:

\begin{enumerate}[(A0)]
	\item $\ker \ga = \{0\}$; \label{condModel}
\end{enumerate}
\begin{enumerate}[({A}1)]
	\item the slope operator $\BB$ is a Hilbert--Schmidt operator on $\HH$ such that the sequence 
	\begin{align*}
		\left\{ \sum_{j'=1}^\infty \langle \BB \phi_j, \phi_{j'} \rangle^2 \right\}_{j=1}^\infty
	\end{align*}
	is non-increasing; \label{condSlope}
	
	\item $\sup_{j \in \N} \g_j^{-2} \eo[ \langle X_1-\eo[X_1], \phi_j \rangle^4 ] < \infty$; \label{condMoment}
	
	\item the eigenvaules $\{\g_j\}_{j=1}^\infty$ satisfy $\g_j = \varphi(j)$ for sufficiently large $j$, where $\varphi:(0,\infty) \to (0,\infty)$ is a convex function; \label{condEVconvex}
	
	\item $\sup_{j \in \N} \g_j j \log j <\infty$; \label{condEVdecay}
	
	\item $J^{-1} + n^{-1/2} J^2 \log J + n^{-1} \sum_{j=1}^J \dt_j^{-2} \to 0$ as $n\to\infty$, where $\{\dt_j\}$ are eigengaps defined as $\dt_j \equiv \g_1 - \g_2$ if $j=1$ and $\dt_j \equiv \min \{\g_j - \g_{j+1}, \g_{j-1} - \g_j \}$ if $j > 1$ otherwise;
	\label{condBasicTrunc}
	
\end{enumerate}

\begin{enumerate}[(B)]
	\item $\sup_{j \in \N} j^{-1} m(j,u) \|B\phi_j\|^2 < \infty$ for $u>5$, \label{condBias}
\end{enumerate}
where
\begin{align}
	m(j,u) \equiv \max \left\{ j^u, \sum_{l=1}^j \dt_l^{-2} \right\}, \quad j \in \N, \quad u >0; \label{eq_mju}
\end{align}

\begin{enumerate}[(S)]
	\item $J \tau_J(X_0)^{-1} = O_\pr(1)$, where $\tau_J(x) \equiv \sum_{j=1}^J \g_j^{-1} \langle x-\eo[X_1], \phi_j \rangle^2$ for $x \in \HH$; \label{condScale}
\end{enumerate}

\begin{enumerate}[(E)]
	\item there exists a class $\{g_x\}_{x \in \HH}$ of (non-random) functions on $[0,\infty)$ to $[0,\infty)$ with $\lim_{u\to\infty}g_x(u) = 0$ for each $x \in \HH$ such that  $\eo[\langle \e, x \rangle^2 \I(|\langle \e, x \rangle| \geq u)|X] \leq g_x(u)$ almost surely for any $u \geq 0$ and $x \in \HH$; \label{condError}
\end{enumerate}

\begin{enumerate}[(R)]
	\item $\lim_{n\to\infty} J/J_{\mathrm{cen}} \geq 1$; \label{condRatio}
\end{enumerate}

\begin{enumerate}[(O)]
	\item the truncation levels $J_{\mathrm{res}}$ and $J_{\mathrm{cen}}$ satisfy Condition~\ref{condBasicTrunc} and Condition~\ref{condScale} holds for $J_{\mathrm{cen}}$; and \label{condOther}
\end{enumerate}

\begin{enumerate}[(T)]
	\item $n^{-1/2}J^{5/2} (\log J)^J \to 0$ and $n = O(m(J,u))$ for $u>5$. \label{condTrunc}
\end{enumerate}

\section{Some details in numerical studies} \label{AppNumDetails}

\subsection{Implementation of the RHD} \label{AppRHDimple}

The sample counterpart of the RHD of $x \in \HH$ in Equation~\eqref{eqRHDmod} of \autoref{ssecFdepthRHD} is given as
\begin{align*}
	D_{\mathrm{RHD},\ld}(x;\widehat{Q}_n) 
	= \inf_{v \in \vv_\ld(\widehat{Q}_n)} n^{-1} \sum_{i=1}^n \I( \langle X_i - x, v \rangle \geq 0).
\end{align*}
We adopt a random projection approach to approximate the sample RHD $D_{\mathrm{RHD},\ld}(x;\widehat{Q}_n)$. 
We first draw $M$ random directions $\{\tilde{v}_m\}_{m=1}^M$ from $\vv_\ld(\widehat{Q}_n)$
and then compute the minimum instead of the infimum:
\begin{align*}
	\tilde{D}_{\mathrm{RHD},\ld}(x;\widehat{Q}_n)
	\equiv \min_{1 \leq m \leq M} n^{-1} \sum_{i=1}^n \I( \langle X_i - x, v_m \rangle \geq 0). 
\end{align*}

To explain our tie-breaking approach, 
we first find a direction that shows the largest outlyingness.
Namely, 
for each $x$, we may find $L = L(x)$ minimizers as
\begin{align*}
	\{\tilde{w}_1(x), \dots, \tilde{w}_L(x)\}
	= \argmin_{\tilde{v} \in \{\tilde{v}_m\}_{m=1}^M}n^{-1} \sum_{i=1}^n \I( \langle X_i - x, v_m \rangle \geq 0).
\end{align*}
and take average to obtain $\bar{w}(x) \equiv L^{-1} \sum_{l=1}^L \tilde{w}_l(x)$. 
We now consider the outlyingness $O_{\bar{w}(x)} (x;\widehat{Q}_n)$ of the projection of $x$ onto $\bar{w}(x)$,
where 
\begin{align*}
	O_v(x;P)
	\equiv {|\langle x, v \rangle - \mathsf{med}[\langle X, v \rangle]| \over \mathsf{MAD}[\langle X, v \rangle]}
\end{align*}
with median $\mathsf{med}$ and median absolute deviation  $\mathsf{MAD}$ operators.

Suppose we have several $x_1, \dots, x_Q$ that are tied in RHD rankings.
Then, we break these ties using the outlyingness in the projection directions $\{\bar{w}(x_q)\}_{q=1}^Q$ that minimize the halfspace probabilites.
That is, the rankings of $\{O_{\bar{w}(x_q)} (x_q;\widehat{Q}_n)\}_{q=1}^Q$ are assigned to the rankings of $\{x_q\}_{q=1}^Q$, 
where lower rankings mean higher outlyingness $O_v(x)$.

%define sample version, 
%
%define approximate,
%
%09/09/2025: break ties in RHD rankings
%
%Based on Algorithms in \cite{YDL25RHD}.
%
%Recall that the approximate RHD is 
%\begin{align*}
%	\tilde{D}_{\ld,n,M}(x) = \min_{\hat{\bm{a}} \in \tilde{\aaa}_{\ld,J,M}} n^{-1} \sum_{i=1}^n \I( (\hat{\bm{X}}_i - \hat{\bm{x}})^\top \hat{\bm{a}} \geq 0 ).
%\end{align*}

\subsection{Orthogonal systems} \label{AppOrtho}

We provide the detailed constructions of the orthonormal systems used for the numerical studies in \autoref{ssec_5_1},
which are also adopted by \cite{YK26}. 
The trigonometric functions $\{\phi_{\mathrm{tri},j}\}_{j=1}^{J_{\mathrm{true}}}$ are defined as 
\begin{align*}
	\phi_{\mathrm{tri},1}(t) = 1,
	\phi_{\mathrm{tri},2m}(t) = \sqrt{2} \sin (2m\pi t),
	\phi_{\mathrm{tri},2m+1}(t) = \sqrt{2} \cos (2m\pi t),
	m \in \N.
\end{align*}
For the remaining three orthornormal systems, we use the following three sets $\{f_{q,j}\}_{j=1}^{J_{\rm true}}$ for $q \in \{\rm mono, cheb, spl\}$ of linearly independent functions.

\smallskip

\noindent{\it Monomials:} \
\[
f_{\mathrm{mono},j}(t) = t^j,
\quad t \in [0,1],
\quad j=1,\dots,J_{\rm true}.
\]

\smallskip

\noindent{\it Shifted  Chebyshev polynomials:} \
The (unshifted) Chebyshev polynomials
$\{f_{\mathrm{cheb},j}\}_{j=1}^{J_{\mathrm{true}}}$
of the first kind are determined by the recurrence
formula $f_{\mathrm{cheb,unshift},j}(s) =
2sf_{\mathrm{cheb,unshift},j-1}(s) + f_{\mathrm{cheb,unshift},j-2}(s)$ for $j \geq 2$
with $f_{\mathrm{cheb,unshift},1}(s)=s$ and $f_{\mathrm{cheb,unshift},0}(s)=1$
where $s \in [-1,1]$ \cite[Chapter~3]{Caro98};
these are implemented by \texttt{chebPoly} function in \texttt{pracma}
R package. We shift them to $[0, 1]$
\[
f_{\mathrm{cheb},j}(t) = f_{\mathrm{cheb,unshift},j}(2t-1), \quad t \in [0,1], \quad j=1,\dots,J_{\mathrm{true}}.
\]

\smallskip

\noindent{\it B-spline basis functions of order 4:} \
We use \texttt{create.bspline.basis} function in \texttt{fda} package to create well-known B-spline basis functions of order 4 (i.e., cubic splines) and denote them by $\{f_{\mathrm{spl},j}\}_{j=1}^{J_{\rm true}}$.

\smallskip

\noindent
For each $q \in \{\rm mono, cheb, spl\}$,
we orthogonalize the set $\{f_{q,j}\}_{j=1}^{J_{\rm true}}$
by using the Gram--Schmidt orthogonalization \cite[Theorem~2.4.10]{HE15},
then normalize them to unit norm,
and the resulting orthonormal system is denoted as $\{\phi_{q,j}\}_{j=1}^{J_{\mathrm{true}}}$.

\subsection{Construction of alternative mean functions} \label{AppAlter}

In this section, we provide the alternative mean functions considered for the numerical studies in \autoref{ssec_5_1}, which are plotted in \autoref{fig_2test_alter}.

\begin{itemize}
	\item $\mu_{2,\mathrm{Mag}}(t) = 1$

	\item $\mu_{2,\mathrm{Jump}}(t) = -2\I(t \leq 0.2)+1$
	
	\item $\mu_{2,\mathrm{Peak}}(t) = -2\I(0.2<t\leq0.4)+1$

	\item $\mu_{2,\mathrm{Lin}}(t) = 2t-1$

	\item $\mu_{2,\mathrm{Quad}}(t) = 8(t-1/2)^2-1$
	
	\item $\mu_{2,\mathrm{Cub}}(t) = 12 \sqrt{3}t(t-1/2)(t-1)$
	
	\item $\mu_{2,\mathrm{Wig}}(t) = \sin(10\pi(t-0.05))$
	
\end{itemize}

\subsection{Implementation of the classical summary statistics for two-sample inference} \label{AppFVE2test}

In the two-sample framework in \autoref{ssec_2test},
let $\widehat{\ga}_{\mathrm{pool}}$ be the pooled covariance operator,
i.e.,
\begin{align*}
	\widehat{\ga}_{\mathrm{pool}}
	= {1 \over n_1+n_2} \sum_{k=1}^2 \sum_{i=1}^{n_k} (X_{ki} - \bar{X}_k)^{\otimes 2}. 
\end{align*}
Write $\hat{\g}_{\mathrm{pool},j}$ for the $j$-th eigenvalue of $\widehat{\ga}_{\mathrm{pool}}$,
and consider the fraction of variance explained at the truncation $J$ as
\begin{align*}
	FVE(J) \equiv {\sum_{j=1}^J \hat{\g}_{\mathrm{pool},j} \over \sum_{j=1}^{n_1+n_2} \hat{\g}_{\mathrm{pool},j}}.
\end{align*}
Then, for a threshold $\rho \in (0,1)$, we pick the smallest $\widehat{J}$ such that $FVE(\widehat{J})$ exceed $\rho$,
that is,
\begin{align*}
	\widehat{J} \equiv \argmin\{ J: FVE(J) \geq \rho \}. 
\end{align*}
This selection strategy is used for the methods QuadInv and QuadMul explained in \autoref{ssec_5_1}.

\subsection{Implementation of the FoFR estimation} \label{AppDetailFoFR}

This section describes the truncation level selection for the residual bootstrap used in \autoref{ssec_5_2}.

Consider a set $J_{\mathrm{res}}$ of candidate truncation levels, e.g., $\jj_{\mathrm{res}} = \{1, \dots, 20\}$.
We assume $n = Gm$ for simplicity, e.g., $G = 5$, and divide the original sample $\dd \equiv \{ (Y_i, X_i) \}_{i=1}^n$ into $G$ groups, say $\dd_1, \dots, \dd_G$. 
For each $J_{\mathrm{res}} \in \jj_{\mathrm{res}}$ and $g = 1, \dots, G$, we perform the following.
\begin{enumerate}
	\item 
	Use the $g$-th dataset $\dd_g$ as the testing sample and the other datasets $\bigcup_{g' \neq g} \dd_g$ as the training samples.
	The sample sizes are $m$ and $n-m$, and we write $\ii_{\mathrm{test}}$ and  $\ii_{\mathrm{tr}}$ for the corresponding index sets.

	\item 
	Using the training sample $\{ (Y_i, X_i) \}_{i \in \ii_{\mathrm{tr}}}$, compute the estimator $\widehat{\BB}_{\mathrm{tr},J_{\mathrm{res}}}$ with truncation $J_{\mathrm{res}}$. 
	
	\item 
	Using the testing sample $\{ (Y_i, X_i) \}_{i \in \ii_{\mathrm{test}}}$,
	compute the prediction error
	\begin{align*}
		PE_g(J_{\mathrm{res}}) \equiv m^{-1} \sum_{i \in \ii_{\mathrm{test}}} \|Y_i - \bar{Y} - \widehat{\BB}_{\mathrm{tr},J_{\mathrm{res}}} (X_i - \bar{X})\|^2.
	\end{align*}

\end{enumerate}
We then select an optimal truncation level $\widehat{K}$ that minimized the averaged estimated prediction error, i.e.,
\begin{align*}
	\widehat{J}_{\mathrm{res}}
	= \argmin_{J_{\mathrm{res}} \in \jj_{\mathrm{res}}} G^{-1} \sum_{g=1}^G PE_g(J_{\mathrm{res}}).
\end{align*}

Upon setting $\widehat{J}_{\mathrm{cen}} = \widehat{J}_{\mathrm{res}}$, the main truncation level $J$ is chosen by the fraction of variance explained similarly that in \autoref{AppFVE2test}.
We define the fraction of variance explained at the truncation $J$ as
\begin{align*}
	FVE(J) \equiv {\sum_{j=1}^J \hat{\g}_j \over \sum_{j=1}^n \hat{\g}_j}.
\end{align*}
Then, for a threshold $\rho \in (0,1)$, we pick the smallest $\widehat{J}$ (but no less than $\widehat{J}_{\mathrm{cen}}$) such that $FVE(\widehat{J})$ exceed $\rho$,
that is,
\begin{align*}
	\widehat{J} \equiv \argmin\{ J \geq \widehat{J}_{\mathrm{cen}}: FVE(J) \geq \rho \}. 
\end{align*}

\end{document}